\definecolor{darkblue}{rgb}{0.,0.,0.4}
\definecolor{darkred}{rgb}{0.5,0.,0.}
\newtheorem{thm}{Theorem}
\newtheorem*{thm*}{Theorem}
\newtheorem{lem}[thm]{Lemma}
\theoremstyle{definition}
\theoremstyle{plain}
\newcommand{\ket}[1]{\left| {#1} \right\rangle }
\newcommand{\bra}[1]{\left\langle  {#1} \right| }
\newcommand{\sgn}{\mathop{\mathrm{sgn}}\nolimits}
\newcommand{\rd}{\mathrm{d}}
\newcommand{\proj}[1]{\left|#1\right\rangle\left\langle #1\right|}
\newcommand{\nc}{\newcommand}
\nc{\rnc}{\renewcommand}
\nc\eps{\epsilon}
\nc\bbC{\mathbb{C}}
\DeclareMathOperator*{\E}{\mathbb{E}}
\nc\bbF{\mathbb{F}}
\nc\bbM{\mathbb{M}}
\nc\bbN{\mathbb{N}}
\nc\bbR{\mathbb{R}}
\nc\bbS{\mathbb{S}}
\nc\bbZ{\mathbb{Z}}
\nc\bp{\mathbf{p}}
\nc\bq{\mathbf{q}}
\nc\benum{\begin{enumerate}}
\nc\eenum{\end{enumerate}}
\nc\bit{\begin{itemize}}
\nc\eit{\end{itemize}}
\newcommand{\lemref}[1]{Lemma~\ref{lem:#1}}
\nc{\todo}[1]{\textcolor{red}{todo: #1}}
\nc\cA{\mathcal{A}}
\nc\cB{\mathcal{B}}
\nc\cC{\mathcal{C}}
\nc\cD{\mathcal{D}}
\nc\cE{\mathcal{E}}
\nc\cF{\mathcal{F}}
\nc\cG{\mathcal{G}}
\nc\cH{\mathcal{H}}
\nc\cI{\mathcal{I}}
\nc\cJ{\mathcal{J}}
\nc\cK{\mathcal{K}}
\nc\cL{\mathcal{L}}
\nc\cM{\mathcal{M}}
\nc\cN{\mathcal{N}}
\nc\cO{\mathcal{O}}
\nc\cP{\mathcal{P}}
\nc\cQ{\mathcal{Q}}
\nc\cR{\mathcal{R}}
\nc\cS{\mathcal{S}}
\nc\cT{\mathcal{T}}
\nc\cU{\mathcal{U}}
\nc\cV{\mathcal{V}}
\nc\cW{\mathcal{W}}
\nc\cX{\mathcal{X}}
\nc\cY{\mathcal{Y}}
\nc\cZ{\mathcal{Z}}
\DeclareMathOperator{\GL}{GL}
\DeclareMathOperator{\tr}{tr}
\DeclareMathOperator{\id}{id}
\DeclareMathOperator{\poly}{poly}
\DeclareMathOperator{\Var}{Var}
\def\be#1\ee{\begin{equation}#1\end{equation}}
\def\bea#1\eea{\begin{eqnarray}#1\end{eqnarray}}
\def\beas#1\eeas{\begin{eqnarray*}#1\end{eqnarray*}}
\def\ba#1\ea{\begin{align}#1\end{align}}
\def\bas#1\eas{\begin{align*}#1\end{align*}}
\def\bpm#1\epm{\begin{pmatrix}#1\end{pmatrix}}
\def\eq#1{(\ref{eq:#1})}
\rnc\L{\left}
\nc\R{\right}
\nc\ra{\rightarrow}
\nc\ot{\otimes}
\begin{document}

\title{Sample-optimal tomography of quantum states}

\author{Jeongwan Haah}
\affiliation{Station Q Quantum Architectures and Computation, Microsoft Research, Redmond, Washington, USA}
\affiliation{Center for Theoretical Physics, Massachusetts Institute of Technology, Cambridge, Massachusetts, USA}
\author{Aram W. Harrow}
\affiliation{Center for Theoretical Physics, Massachusetts Institute of Technology, Cambridge, Massachusetts, USA}
\author{Zhengfeng Ji}
\affiliation{Centre for Quantum Computation \& Intelligent Systems, Faculty of Engineering and Information Technology, University of Technology, Sydney, NSW 2007, Australia}
\affiliation{Institute for Quantum Computing, University of Waterloo, Waterloo, Ontario, Canada}
\affiliation{State Key Laboratory of Computer Science, Institute of Software, Chinese Academy of Sciences, Beijing, China.}
\author{Xiaodi Wu}
\affiliation{Department of Computer and Information Science, University of Oregon, Eugene, Oregon, USA}
\author{Nengkun Yu}
\affiliation{Institute for Quantum Computing, University of Waterloo, Waterloo, Ontario, Canada}
\affiliation{Centre for Quantum Computation \& Intelligent Systems, Faculty of Engineering and Information Technology, University of Technology, Sydney, NSW 2007, Australia}
\affiliation{Department of Mathematics \& Statistics, University of Guelph, Guelph, Ontario, Canada}


\begin{abstract}
  It is a fundamental problem to decide how many copies of an unknown
  mixed quantum state are necessary and sufficient to determine the
  state.  Previously, it was known only that estimating states to
  error $\epsilon$ in trace distance required $O(dr^2/\epsilon^2)$
  copies for a $d$-dimensional density matrix of rank $r$.  Here, we
  give a theoretical measurement scheme (POVM) that requires
  $O (dr/ \delta ) \ln (d/\delta) $ copies of $\rho$ to error
  $\delta$ in infidelity, and a matching lower bound up to logarithmic
  factors.  This implies $O( (dr / \epsilon^2) \ln (d/\epsilon) )$ copies
  suffice to achieve error $\epsilon$ in trace distance.
We also prove that for independent (product) measurements,
$\Omega(dr^2/\delta^2) / \ln(1/\delta)$ copies are necessary
in order to achieve error $\delta$ in infidelity.
  For fixed $d$, our measurement can be implemented on a quantum computer in
time polynomial in $n$.  \preprint{MIT-CTP/4699}
\end{abstract}

\maketitle

\begin{table*}[bt]
\centering
\caption{
 Conditions for the quantum state tomography with high success
 probability.
 $\delta$ denotes the accuracy goal measured in the infidelity $1-F(\rho,\hat \rho)= 1- \|\sqrt{\rho}\sqrt{\hat\rho}\|_1$,
 and $\epsilon$ denotes that in the trace distance $T(\rho,\hat \rho) = \frac 12 \| \rho - \hat \rho \|_1$.
 The upper bound in terms of the infidelity implies that in terms of
 trace distance; $n \le O(d^2/\epsilon^2) \ln (d/\epsilon)$.
The lower bound in terms of the trace distance implies that in terms
of infidelity; e.g. $n \ge \Omega( d^2 / \delta )$.
The lower bound for the independent measurements in rank $r$ case
implies $n \ge \Omega( dr^2 / \epsilon^2 \ln(1/\epsilon))$.
The previously known upper bound on $n$ already used only independent
measurements; thus our lower bounds show that this result was
essentially optimal.}
\begin{tabular}{|c|c|c|c|}
\hline
& \multicolumn{2}{|c|}{Our result} & Previous result\\
\hline
 & for general $\rho \in \mathbb C^{d \times d}$ & \multicolumn{2}{|c|}{for $\rho$ of rank at most $r$ }\\
\hline
Sufficient & $n \le O(d^2/\delta) \ln (d/\delta)$ & $n \le O(rd/\delta) \ln (d/\delta)$ & $n \le O(r^2 d / \epsilon^2)$ \cite{KRT14} See Sec.~\ref{sec:KRT14}.\\
\hline
Necessary & $n \ge \Omega\left( d^2/\epsilon^2 \right)$ & $n \ge \Omega\left( rd/\epsilon^2 \right) / \ln( d / r \epsilon)$
  & $n \ge \Omega(1/\epsilon^2) + \tilde \Omega(rd)$ \cite{FlammiaLiu2011Direct}\\
\hline
\parbox{20ex}{\vspace{1mm} Necessary using \\independent\\
  measurements \vspace{1mm} }
& $n\geq \Omega(d^3/\epsilon^2)$
& $n\geq \Omega(dr^2/\delta^2\ln(1/\delta))$
& $n \geq \Omega(1/\delta^2\ln(1/\delta))$ See Sec.~\ref{sec:prior}.\\
\hline
\end{tabular}
\label{tb:result}
\end{table*}

Given $n$ copies of an unknown $d$-dimensional quantum state $\rho$,
how accurately can $\rho$ be estimated? This fundamental question
arises both in quantum information theory and in the interpretation of
experimental results.  Since $\rho$ has $d^2-1$ real parameters, it is
reasonable to conjecture that $\Theta(d^2)$ measurements are necessary
and sufficient to estimate $\rho$ to constant accuracy. On the other
hand, even distinguishing a fair coin from a coin biased to obtain
heads with probability $1/2+\eps$ requires $\Omega(1/\eps^2)$
measurements.

In this paper we show that the number
of copies required to estimate $\rho$ with precision $\eps$ scales
roughly with both $d^2$ and $1/\eps^2$.
More precisely, if the fidelity goal is $1-\delta$,
we prove an $\Omega(d^2/ \delta)$ lower bound
and an $O((d^2/ \delta) \ln(d/\delta))$ upper bound
on the number of required copies.
When the state $\rho$ is guaranteed to have rank $\leq r$
we show an $O((dr/ \delta)\ln(d/\delta))$ upper bound
and an $\Omega((dr/\delta)/\ln(d/r\delta))$ lower bound.
We also prove a lower bound $\Omega(dr^2/\delta^2)/\ln(1/\delta)$
for independent measurement schemes
where individual copies are measured independently and then
the outcomes are processed to output an estimate $\hat \rho$.
Our result is summarized in Table~\ref{tb:result}.

\paragraph*{Notation}
We use the convention that $\Omega(x)$ means a
function that is asymptotically $\geq c_1x$ for a constant $c_1>0$,
$O(x)$ means $\leq c_2 x$ for a constant $c_2>0$ and $\Theta(x)$ means
both $O(x)$ and $\Omega(x)$.
Notation $\tilde O()$ means that we neglect $\ln$ factors.
$\ln$ and $\exp$ are base-$e$.

\section{Accuracy measures}

The fidelity of two quantum states $\rho,\sigma$ is
$F(\rho,\sigma) := \tr\sqrt{\sqrt\rho ~\sigma \sqrt\rho}$,
the ``infidelity'' is $1-F$, represented by $\delta$, and
their trace distance is $T(\rho,\sigma):=\frac 1 2 \|\rho-\sigma\|_1$, represented by $\epsilon$.
These are related by~\cite{FuchsGraaf2007Cryptographic}
\begin{align}
1- F \leq T \leq \sqrt{1-F^2}.
\label{eq:FuchsGraaf}
\end{align}

We derive an upper bound in terms of fidelity and a lower bound in
terms of trace distance, in each case implying a near-optimal bound in
terms of the other quantity.   Here we discuss why fidelity is in many
ways a natural quantity for tomography~\cite{Wootters1981}.
Tomography is essentially a state discrimination procedure
where one distinguishes $\rho^{\otimes n}$ from $\sigma^{\otimes n}$.
The statistical distinguishability of these states is measured by the trace distance
$T_n = T(\rho^{\otimes n}, \sigma^{\otimes n})$,
which is in general much larger than $T(\rho, \sigma)$;
this amplification is what enables the tomography.
The asymptotic behavior of $T_n$ can be quantified as
\[
 \frac 1 2  F(\rho, \sigma)^{2n} \leq 1- T_n \leq F(\rho, \sigma)^{n}
\]
by Eq.~\eqref{eq:FuchsGraaf} and $F(\rho^{\otimes n} , \sigma^{\otimes n}) = F(\rho,\sigma)^n$.
This means that
$\ln (1/F)$ or infidelity gives nearly sharp bounds on the rate
at which $T_n$ converges to 1;
the actual rate%
\footnote{
The exact scaling of $1- T_n$ for large $n$ is known to be $C^n$
where $C = C(\rho,\sigma) = \inf_{ 0 \leq s \leq 1 } \tr( \rho^s
\sigma^{1-s} )$, and $\ln(1/C)$ is called the quantum Chernoff distance%
~\cite{NussbaumSzkola2009Chernoff, AudenaertCalsamigliaMuEtAl2007Chernoff}.
}
is between $\ln(1/F)$ and $2 \ln (1/F)$.
In particular, for fixed $d$,
the state discrimination is possible to infidelity $\delta$
using $n = \Theta( 1/ \delta )$ copies.
Our upper bound on $n$ in terms of fidelity proves
that the POVM we present in this paper
indeed accomplishes the discrimination task using $n =\tilde O( 1/ \delta )$ copies.
On the contrary, the corollary upper bound in terms of trace distance
sometimes over-estimates the sufficient number of samples by an
unbounded amount.
As a simple example, consider qubit states
\[
 \rho = \begin{pmatrix} 1 & 0 \\ 0 & 0 \end{pmatrix} \quad \text{and} \quad
 \sigma = \begin{pmatrix} 1- \eps & 0 \\ 0 & \eps \end{pmatrix} ,
\]
between which the trace distance is $\eps$ and the infidelity is $1-\sqrt{1-\epsilon} \simeq \eps/2$.
The trace distance bound only says $n = \tilde O( 1/ \eps^2 )$ copies
are sufficient to distinguish them,
whereas the fidelity bound says $n = \tilde O( 1/ \eps )$ copies are sufficient.

\section{Previous Results}
\label{sec:prior}

Quantum state estimation has been extensively studied, going back at
least to the work of Helstrom~\cite{Helstrom69},
Holevo~\cite{holevo-book} and others from around 1970.
Many of the rigorous results are for the special cases when $d=2$ or
$r=1$, or give an uncontrolled or suboptimal $d$ dependence (e.g.~with
$n$ scaling as $f(d)/\delta$ for unknown $f$) or discuss related
problems such as spectrum estimation, parameter estimation or
determining the identity of a state drawn from a discrete set.
In this paper we will consider optimal measurements
(also called ``collective'' measurements)
and will not discuss the extensive literature
on independent or adaptive measurements.

For $d=2$ (i.e.~qubits), the optimal infidelity was shown in
\cite{BBMR04,BBGMM06,GK06,GJK08,HayashiM08} to scale as $1/n$.
This scaling was generalized to qudits in \cite{GK09} (see also Section 6.4 of
\cite{hayashi-book}), but with an uncontrolled dependence on $d$
(i.e.$n$ scales as $f(d)/\delta$ for unknown $f(\cdot)$); see also \cite{Keyl06}.
In many settings (e.g.~minimax estimation) one can show that covariant
measurements are optimal.  If one further assumes that $\rho$ is pure
then the optimal estimation strategy has a simple form and $n$ should
scale as $\Theta(d/\delta)$~\cite{Hayashi98,holevo-book}; see also
\cite{Chiri10} where further connections were made to cloning and de
Finetti theorems.

Another major theme in recent work has been the study of various forms
of restricted measurements, e.g.~independent measurements with a
limited number of measurement settings.  Intermediate between
independent measurements and unrestricted (also called ``collective''
or ``entangled'') measurements are {\em adaptive} measurements in
which the copies of $\rho$ are measured individually, but the choice
of measurement basis can change in response to earlier measurements.

On the achievability side for independent measurements, a sequence of
works~\cite{compressed,FlammiaGrossLiuEtAl2012,Vlad13,KRT14} showed that
$n=O(dr^2/\eps^2)$ copies are sufficient to obtain trace distance
$\leq \eps$ with high probability.%
\footnote{%
  The earlier
  papers~\cite{compressed,FlammiaGrossLiuEtAl2012} achieved
  $n=\tilde O(d^2r^2/\eps^2)$.
  The improved   $n=O(dr^2/\eps^2)$ performance
  is achieved by analyzing Theorem~2 of \cite{KRT14}.  This is not
obvious from their theorem statement, but we explain the connection in
Sec.~\ref{sec:KRT14}.
}
On the other hand, even for $d=2$, adaptive and collective
measurements are known to have asymptotically better error scaling, at least when
measured in terms of infidelity. The usual intuition is that $n$
should scale as $1/\delta^2$ for independent measurements and
$1/\delta$ for adaptive or collective measurements; e.g.~see
\cite{MahlerRozemaDarabiEtAl2013Adaptive} for numerical evidence.
Refs.~\cite{BBGMM06,HayashiM08} showed that adaptive measurements
could achieve $n = O(1/\delta)$ scaling.
When a POVM contains a finitely many elements,
the lower bound $1/ \delta^2$ can be demonstrated by considering
qubit tomography when the density matrix does not commute with POVM elements.
We were unable to find a reference that proves this particular fact.
Ref.~\cite{nonadaptive-qubit-LB} gave an
$\Omega(\frac{1}{\delta'^2\ln(1/\delta')})$ lower bound for independent
measurements with {\em relative entropy $\delta'$} as accuracy measure
without restriction that POVM should consist of finitely many elements.

In many cases it is not necessary to determine the full state $\rho$
but only to estimate some parameters of the state.
This is an extremely general problem
which includes results such as a quantum version of the Cram\'er-Rao
bound~\cite{Helstrom69, GillM00,Hayashi-CLT} again going back to the
early prehistory of quantum information.
One special case that uses similar representation-theory techniques to our work
is the problem of spectrum estimation.
Here, the optimal covariant measurement was described by Keyl and Werner~\cite{Keyl01},
its large-deviation properties were derived in \cite{HayashiMatsumoto2002}
(see also \cite{ChristandlMitchison2006}),
and it was analyzed further in \cite{CHW07,spectrum}.
Ref.~\cite{spectrum} in particular showed (among other results)
that the Keyl-Werner algorithm required
\[
\Omega \left( \frac{d^2}{\eps^2} \right)
\leq n \leq
O\left( \frac{d^2}{\eps^2} \ln \frac d \eps \right).
\]
Our results improve the upper bound by using the same number of copies
to obtain a full estimate of $\rho$ instead of merely its spectrum.
We also improve the lower bound by showing that it applies to
{\em all} estimation strategies, not only the Keyl-Werner algorithm;
on the other hand, our lower bound is for the harder problem of state estimation,
while the lower bound of Ref.~\cite{spectrum} is for the problem of spectrum estimation.
We improve both bounds in the case when $r \ll d$.

The problem of quantum state estimation can be thought of as a special
case of minimax estimation
(i.e.~choosing an estimator that minimizes the expected loss
when we maximize over input states)
when the loss function is given by the infidelity.
Other loss functions have also been considered~\cite{Gill05,minimax}.
For example, with the 0-1 loss function (assuming $\rho$ is drawn from a finite set)
the goal is to maximize the probability of guessing $\rho$ correctly.
Here a powerful heuristic is to use the so-called
``pretty good measurement'' or PGM~\cite{Belavkin75a,Belavkin75b,PGM},
whose error is never worse than twice that of the optimal measurement for any ensemble~\cite{BK02}.
While the PGM requires a prior distribution,
prior-free versions can also be constructed~\cite{HW06}.
We will describe two closely related measurements in this paper: first,
one closely related to the PGM and then one (with roughly equivalent
performance) that corresponds precisely to a PGM over an appropriately
chosen ``uniform'' ensemble of density matrices. In each case, we
analyze the measurements directly, without making use of the results
of \cite{BK02,HW06} or other prior work.

\subsection{Sample complexity in Kueng et al.~\cite{KRT14}} \label{sec:KRT14}
The previously best achievable sample complexity for state tomography
was described in \cite{KRT14}.  Their setting does not naturally
translate into our framework, so for convenience we sketch here how that is
achievable.  First we restate one of their main theorems:
\begin{thm}
There are universal constants $C_1,C_2,C_3>0$ such that the following
holds for any $r,d$.  Let $a_1,\ldots,a_m\in\bbC^d$ be independent standard Gaussian
vectors; i.e.~normalized such that $\E[\ket{a_i}\bra{a_j}] = I_d
\delta_{ij}$.
If $m\geq C_1dr$, then with probability $\geq
1 - e^{-C_2m}$ our choice of $a_1,\ldots,a_m$ is ``good'' in a sense
we will define below.

For $X$ a matrix, define $\cA(X) = \sum_j \bra{a_j}X\ket{a_j} \ket{j}
\in \bbR^m$.   Given a $d$-dimensional density matrix $\rho$, a
vector $b\in\bbR^m$ and a noise parameter $\eta$, define $\sigma$
be any minimum of the following convex program:
\[
\min \|\sigma\|_1 \text{ subject to } \|\cA(\sigma)-b\|_2 \leq \eta.
\]
Suppose further that $\|\cA(\rho)-b\|_2 \leq \eta$.    If the
vectors $a_1,\ldots,a_m$ are good, then we have
\begin{align}
\|\rho-\sigma\|_2 \leq C_3\frac{\eta}{\sqrt{m}}.
\label{eq:KRT-estimate}
\end{align}
\end{thm}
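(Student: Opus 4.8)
The plan is to run the standard low-rank matrix recovery argument, separating a purely deterministic recovery guarantee from a probabilistic statement about Gaussian vectors. I would define $a_1,\dots,a_m$ to be \emph{good} precisely when $\cA$ obeys a restricted isometry property on low-rank matrices: there are constants $0<\alpha\le\beta$ with
\[
  \alpha\sqrt m\,\|X\|_2 \;\le\; \|\cA(X)\|_2 \;\le\; \beta\sqrt m\,\|X\|_2
\]
for every Hermitian $X$ of rank at most $4r$, the $\sqrt m$ reflecting $\E\|\cA(X)\|_2^2\asymp m\|X\|_2^2$. With ``good'' defined this way the theorem splits into two claims: (i) the isometry property deterministically forces \eqref{eq:KRT-estimate}, and (ii) Gaussian $a_j$ satisfy it with probability $\ge 1-e^{-C_2 m}$ once $m\ge C_1 dr$.

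For (i), set $h=\sigma-\rho$. Since both $\rho$ and $\sigma$ are feasible for the program, the triangle inequality gives $\|\cA(h)\|_2\le\|\cA(\sigma)-b\|_2+\|\cA(\rho)-b\|_2\le 2\eta$. Let $T$ be the tangent space at $\rho$ (matrices whose row or column space lies in that of $\rho$) and $P_T$ the associated projection. Optimality of $\sigma$ gives $\|\rho+h\|_1\le\|\rho\|_1$; expanding the nuclear norm against $P_T$ and $P_{T^\perp}$ and using that $P_{T^\perp}(h)$ is supported orthogonally to $\rho$ yields the cone condition $\|P_{T^\perp}(h)\|_1\le\|P_T(h)\|_1$ (plus a $\|\rho-\rho_r\|_1$ tail if $\rho$ is not exactly rank $r$). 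A vector in this cone is ``nearly low rank'': splitting $P_{T^\perp}(h)$ into blocks of its singular value decomposition shows $\|h\|_2$ is controlled by the Frobenius norm of a rank-$\le 4r$ truncation, to which the lower isometry bound applies. Combining $\|\cA(h)\|_2\le 2\eta$ with $\|\cA(h)\|_2\ge\alpha\sqrt m\,\|h\|_2$ on this effective low-rank part then gives $\|h\|_2\le C_3\eta/\sqrt m$.

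For (ii), fix a rank-$\le 4r$ matrix $X$; then $\|\cA(X)\|_2^2=\sum_j|\bra{a_j}X\ket{a_j}|^2$ is a sum of i.i.d.\ terms with mean $\asymp\|X\|_2^2$, so it concentrates around $m\|X\|_2^2$. One then upgrades this pointwise bound to a uniform one by a union bound over an $\epsilon$-net of the rank-$\le 4r$ Hermitian matrices; this net has cardinality $e^{O(dr)}$, so the per-point failure probability $e^{-\Omega(m)}$ survives the union bound exactly when $m\ge C_1 dr$, producing the claimed $e^{-C_2 m}$. I expect the main obstacle to be the upper isometry bound: each $\bra{a_j}X\ket{a_j}$ is a \emph{quadratic} form in Gaussians and hence only sub-exponential (heavy-tailed), so the naive $\ell_2$ restricted isometry property is false for these rank-one measurements and the matching upper estimate cannot be proved by sub-Gaussian concentration. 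The fix is to replace the $\ell_2$ formulation by its $\ell_1$ analogue and to prove the essential lower bound by Mendelson's small-ball method, which needs only a lower bound on $\frac1m\|\cA(X)\|_1$ together with a bound on a Gaussian complexity (mean width) of the low-rank set, thereby sidestepping the heavy-tailed upper tail altogether.
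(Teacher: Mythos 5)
A point of orientation first: the paper never proves this theorem. It is restated verbatim from Kueng, Rauhut and Terstiege \cite{KRT14} (their Theorem~2), and Sec.~\ref{sec:KRT14} only derives the quantum sample-complexity consequence $n = O(dr^2/\eps^2)$ from it, so the only meaningful comparison is with the proof in \cite{KRT14}. Measured against that, your closing paragraph lands on exactly the right argument, but it contradicts your headline plan, and you should commit to the former. Defining ``good'' as a two-sided RIP over rank-$\le 4r$ matrices cannot serve as the definition in the theorem: as you yourself observe, the uniform upper bound is simply false for rank-one measurements (take $X$ proportional to $\proj{a_1}$ with $\|X\|_2 = 1$; then $\|\cA(X)\|_2 \ge \bra{a_1}X\ket{a_1} = \|a_1\|_2^2 \approx d \gg \sqrt{C_1 dr}$ when $r \ll d$), and the salvageable RIP-type statement for such measurements (Cand\`es--Plan style, via an $\ell_1/\ell_2$ formulation) costs $m \gtrsim dr \ln d$, strictly weaker than the stated $m \ge C_1 dr$ with failure probability $e^{-C_2 m}$. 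What \cite{KRT14} actually do --- and what your ``fix'' amounts to --- is to define ``good'' as the cone-restricted lower bound $\inf\left\{ \|\cA(X)\|_2 : X \in \cD(\rho),\ \|X\|_2 = 1 \right\} \ge c\sqrt{m}$, where $\cD(\rho)$ is the descent cone of the nuclear norm at rank-$r$ states, established by Mendelson's small-ball method: a Paley--Zygmund estimate for the quadratic form $\bra{a}X\ket{a}$ (its fourth moment is comparable to the square of its second, by Gaussian hypercontractivity) combined with a mean-width bound of order $\sqrt{dr}$ for the cone --- no net, no union bound, no upper isometry estimate. With that definition, your step (i) collapses: feasibility of both $\rho$ and $\sigma$ gives $\|\cA(h)\|_2 \le 2\eta$ for $h = \sigma - \rho$, optimality places $h$ in $\cD(\rho)$, and the lower bound yields $\|h\|_2 \le 2\eta/(c\sqrt{m})$; the tangent-space cone condition and the block decomposition of $P_{T^\perp}(h)$ become unnecessary. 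One further nit: as restated here the theorem omits any rank hypothesis on $\rho$, without which the conclusion is false as literally written; the original includes a stable term of order $\|\rho - \rho_r\|_1/\sqrt{r}$, which the tail term in your parenthetical correctly anticipates.
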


To translate this into a quantum measurement, observe that by the
operator Chernoff bound~\cite{AW02}, we have $\frac{1}{m}\sum_{i=1}^m
\proj{a_i} \approx I_d$ with high probability. (For the purpose of
this analysis, we neglect the error here.)   We can then define a POVM
with elements $E_i = \proj{a_i} /m$.  Measuring this POVM yields
outcome $i$ with probability $p_i := \tr[E_i\rho]$; in the notation of
\cite{KRT14} we have $p = \cA(\rho)/m$.
We will define the vector $b$ of observed probabilities by measuring
$n$ independent copies of $\rho$ using this POVM.  If the resulting
vector of frequencies is $f$, i.e., outcome $i$ occurs $f_i$ times,
then we define $b = \frac{m}{n}f$.  Thus $b$ is an unbiased estimator
of $\cA(\rho)$;
i.e.~$\E[b] = \frac{m}{n}\E[f]= \frac{m}{n}np =\cA(\rho)$.
We can also estimate the error by
\[
\E \| b - \E[b] \|_2^2
= \frac{m^2}{n^2}\sum_{i=1}^m \Var[f_i]
\leq \frac{m^2}{n^2}\sum_{i=1}^m  n p_i
= \frac{m^2}{n}.
\]
We thus have $\eta \leq O(m/\sqrt{n})$ with high probability.
According to \eq{KRT-estimate} we
then have $\|\rho-\sigma\|_2 \leq O(\sqrt{m/n})=O(\sqrt{dr/n})$.
It follows that
\begin{align*}
&\|\rho-\sigma\|_1 \\
&\leq 2\sqrt{\min( \mathrm{rank}(\rho),\mathrm{rank}(\sigma) )} \| \rho - \sigma \|_2 \\
&\leq O(\sqrt{dr^2/n}).
\end{align*}
In other words, trace-distance error $\eps$ can be achieved with $n=O(dr^2/\eps^2)$.
While this bound is significantly worse than our bound of
$\tilde O(dr/\eps^2)$,
their approach does have the significant advantage of
not requiring entangled measurements.
The improved performance of our
bound (as well as that of \cite{OW-tomo})
can be seen as the advantage that entangled measurements yield for tomography.

\subsection{Two-stage measurement scheme using local asymptotic normality}
\label{sec:LAN-sample-complexity}

The local asymptotic normality in Ref.~\cite{GJK08,KahnGuta2008BookChapter}
asserts that $n$ copies $\rho_\theta^{\otimes n}$
of states $\rho_\theta$ in a sufficiently small neighborhood $\mathcal B$ of a state $\rho_{\theta=0}$
behaves like an ensemble of Gaussian states of quantum harmonic oscillators.
In relation to our discussion of state estimation,
it is important that there exists a channel~\cite{GJK08,KahnGuta2008BookChapter},
which is faithful in the limit $n \to \infty$,
from $\rho_\theta^{\otimes n}$ to gaussian states,
such that one can estimate the parameter $\theta$ of the state optimally.
The size of the neighborhood in the correspondence in fact depends on $n$.
Theorem~4.1 in Ref.~\cite{KahnGuta2008BookChapter} gives a lower bound on this size,
which reads
\[
 \mathcal B \supseteq \{ \rho ~:~ \| \rho - \rho_0 \|_2 \le n^{-1/2 + \eta} \}
\]
where $\eta \in (0, 1/6)$.

Based on this result, Ref.~\cite{KahnGuta2008BookChapter}
proposes a two-stage adaptive measurement scheme of a completely unknown state.
In the first stage, using $n_1$ copies of the state, one ``roughly''
measures the state in order to have a confidence region inside $\mathcal B$.
The actual measurement method for this first stage is not shown,
and we assume that this is a non-adaptive independent measurement on each copy.
In the second stage, one uses remaining $n_2 = n - n_1$ copies
and apply the local asymptotic normality
to optimally estimate the state.

Let us analyze the sample complexity of this proposal.
We assume that the channel between $\rho_\theta^{\otimes n}$
and Gaussian states is exactly faithful for any $n$.
This assumption may not be true on its own,
but is certainly a favorable condition to assess the advantage of local asymptotic normality.
After the first stage, the size of the confidence region must be
$\epsilon_i = n_2^{-1/2 + \eta}$ in 2-norm.
This requires at least $n_1 \ge \Omega(d^2 / \epsilon_i^2)$,
by our lower bound Theorem~\ref{thm:LBprod}.
Suppose $\epsilon_f$ is our accuracy goal in 2-norm.
If $\epsilon_f \ge \epsilon_i$, then the second stage becomes redundant,
and overall measurement is by the non-adaptive independent measurement.
Our result says that this scheme cannot be sample-optimal.
If $\epsilon_f < \epsilon_i$, then one needs $n_2 \ge \Omega( d / \epsilon_f^2 )$
in the second stage.
To achieve $\epsilon$-accuracy in 1-norm, we must have $\epsilon_f \le \epsilon / \sqrt{d}$,
and overall sample complexity becomes
\[
 n = n_1 + n_2 \ge \Omega( d^{4- 4 \eta} / \epsilon^{2-4\eta} ) + \Omega( d^2 / \epsilon^2 ) .
\]
Since $\eta < 1/6$, the dependence of $n$ on $d$
is actually worse than the independent non-adaptive scheme,
although the dependence of $n$ on $\epsilon$ is optimal.
In other words,
the measurement scheme using the asymptotic normality
may yield asymptotically optimal error scaling,
but it takes too many samples
to enter the regime where the asymptotic normality becomes useful
for high dimensional states.

\section{Review on representation theory of unitary and symmetric groups}

Schur-Weyl duality is a statement regarding joint representations of a
matrix group and the symmetric group.  This is standard
material~\cite{FultonHarris} in representation theory, but for the
reader's convenience we explain parts that are relevant to our
results.

Consider the Hilbert space
$\mathcal H = (\mathbb C^d)^{\otimes n}$ of $n$ qudits of
$d$-dimensions.  This space admits representations of the general
linear group $GL(d)$ and the symmetric group $\mathbb S_n$.  The
matrix group acts by simultaneous ``rotation'' as $U^{\otimes n}$ for
any $U \in GL(d)$, and the symmetric group acts by permuting tensor
factors.  Concretely, a permutation $\pi \in \mathbb S_n$ is
represented by
\[
P_\pi = \sum_{\{j_i\}} \ket{j_{\pi^{-1}(1)} j_{\pi^{-1}(2)} \cdots j_{\pi^{-1}(n)}}\bra{j_1 j_2 \cdots j_n} .
\]
Two actions $U^{\otimes n}$ and $P_\pi$ obviously commute with each other,
and hence $\mathcal H$ admits a representation of $G = GL(d) \times \mathbb S_n$.
The Schur-Weyl duality states that these two representations are commutants of each other
on $\mathcal H$.
That is, if a matrix $K$ on $\mathcal H$ commutes with all $P_\pi$,
then $K = \sum_i c_i U_i^{\otimes n}$ for some $U_i \in GL(d)$ and numbers $c_i \in \mathbb C$.
Conversely, if a matrix $K$ on $\mathcal H$ commutes with all $U^{\otimes n}$,
then $K = \sum_\pi c_\pi P_\pi$ for some $c_\pi \in \mathbb C$.

Generally, an irreducible representation (irrep) of $G$ is given by the tensor product of
an irrep of $GL(d)$ and an irrep of $\mathbb S_n$.
Since the two groups are mutual commutants on $\mathcal H$,
the irreps in $\mathcal H$ of the two groups must be in a one-to-one correspondence.
They are specified by Young diagrams,
or equivalently, partitions $\lambda = (\lambda_1,\ldots, \lambda_n)$ of $n = \sum_i \lambda_i$,
where $\lambda$ is sorted to be non-increasing.
Thus, we have a decomposition
\[
 (\mathbb C^d)^{\otimes n}
 = \bigoplus_{\lambda \vdash n} \Pi_\lambda  (\mathbb C^d)^{\otimes n}
 = \bigoplus_{\lambda \vdash n} \mathcal \cQ_\lambda \otimes \cP_\lambda
\]
where $\cQ_\lambda$ is the irrep of $\GL(d)$ and $\cP_\lambda$ is the irrep of $\mathbb S_n$,
and $\Pi_\lambda$ is the projector onto the component $\cQ_\lambda \otimes \cP_\lambda$.
Direct consequences of the decomposition are that
\begin{align}
&\Pi_\lambda X^{\otimes n} \Pi_\lambda \cong \bq_\lambda(X) \otimes \mathrm{id}_{\mathcal P_\lambda} \\
&\Pi_\lambda X^{\otimes n} = X^{\otimes n} \Pi_\lambda
\end{align}
for {\em any} $d \times d$ matrix $X$,
where we have defined $\bq_\lambda(X)$ to mean the representing matrix of $X$.
In fact, this is the main reason we are dealing with $GL(d)$,
which is dense in the set of all matrices, rather than the more familiar $\mathbb U(d)$.
The space $\cQ_\lambda$ is also an irrep of the unitary group $\mathbb U(d)$,
and our discussion of Schur-Weyl duality could have been formulated entirely with $\mathbb U(d)$;
however, under this formulation $X$ would be restricted to be unitary.

For our results it is important to understand the characters of the
irrep $\cQ_\lambda$ of $GL(d)$.  We identify a partition $\lambda$
with a {\em Young diagram} in which there are $\lambda_i$ boxes in the
$i^{\text{th}}$ row, e.g. the diagram for $\lambda=(3,2,1,1)$ is as
follows
\[
\yng(3,2,1,1).
\]
Define a Young tableau $T$ with shape
$\lambda$ to be a way of filling each box in $\lambda$ with a number,
e.g.
\[
\young(552,123,4,1).
\]
A {\em standard Young tableau} (SYT) is one in which each number from
$1,\ldots, n$ appears exactly once and numbers strictly increase from
left to right and from top to bottom, while in a semi-standard Young
tableau (SSYT) numbers weakly increase from left to right and strictly
increase from top to bottom.
Associated with a standard Young tableau $T$
there are two subgroups $A_T$ and $B_T$ of $\mathbb S_n$.
$A_T$ is the set of all permutations that permute numbers within the rows of $T$,
and $B_T$ is the set of all permutations that permute numbers within the columns of $T$.
The Young symmetrizer is then defined as
\[
Y_T = \sum_{a \in A_T, b \in B_T} \sgn(b) P_a P_b .
\]
It can be shown that $Y_T$ is proportional to an orthogonal projector,
and it turns out that $Y_T \mathcal H$ is an irrep of $GL(d)$ and is isomorphic to $\cQ_\lambda$.
Since every $T$ with the same $\lambda$ gives rise to an isomorphic irrep of $GL(d)$,
let us set $T$ to be the SYT where $1,2,\ldots,n$ are written
in order from the upper left box towards right and down.
To understand the basis of $\cQ_\lambda$,
let $\ket 1, \ket 2, \ldots, \ket d$ form the standard orthonormal basis of $\mathbb C^d$.
We may regard each basis vector $\ket E =\ket{j_1, \ldots, j_n}$ of $\mathcal H$
as a Young tableau $E$ of shape $\lambda$.
The Young symmetrizer $Y_T$ projects this basis vector to a vector of
$\cQ_\lambda$.
If there is any repetition along a column of $E$,
then $Y_T$ will annihilate it,
thanks to the antisymmetric sum over $P_b$ for $b\in B_T$.
It follows that $\cQ_\lambda = 0$ whenever $\lambda$ has more than $d$ rows.
More precisely, let $\nu_i = \nu_i(E)$ denote the number of times the
basis element $\ket i$ appears
in the tableau $E$ (also known as the {\em weight} of $E$), and let $\nu^{\downarrow}$ be the vector obtained
by sorting $\nu$ into non-increasing order.
Then $Y_T$ annihilates $E$ whenever  $\sum_{i=1}^m \nu^{\downarrow}_i > \sum_{i=1}^m \lambda_i$
for some $m = 1,\ldots, d-1$.
The negation of the last condition is often denoted as
\[
\nu \prec \lambda \Leftrightarrow
\begin{cases}
\sum_{i=1}^m \nu^\downarrow_i \le \sum_{i=1}^m \lambda_i \quad (1 \le m < d) \\
\sum_{i=1}^d \nu^\downarrow_i = \sum_{i=1}^d \lambda_i
\end{cases}
\]
and we say that {\em $\nu$ is majorized by $\lambda$}.
The surviving tableaux $E$ with $\nu(E) \prec \lambda$ form a spanning
set for $\cQ_\lambda$, or if we restrict to SSYT, they form a basis.

Now we can derive an expression for the characters of $\cQ_\lambda$.
Since $\tr \bq_\lambda(X)$ must be a function of eigenvalues of $X$,
we may assume without loss of generality that $X$ is a diagonal matrix
with eigenvalues $x_1, \ldots, x_d$
associated with the standard basis elements $\ket 1, \ldots, \ket d$.
The basis vectors of $\cQ_\lambda$ we just constructed are eigenvectors of diagonal $X^{\otimes n}$;
$X^{\otimes n} Y_T \ket E
= x_1^{\nu_1} \cdots x_d^{\nu_d} Y_T \ket E
=: x^\nu Y_T \ket E$,
where $x^\nu := x_1^{\nu_1} \cdots x_d^{\nu_d}$.
Hence, the character value $\tr \bq_\lambda(X)$ is the sum of these eigenvalues:
\begin{align}
\tr \bq_\lambda(X) = \sum_{\nu} K_{\lambda\nu} x^\nu =: s_\lambda(x).
\label{eq:Schur-polynomial}
\end{align}
Here $K_{\lambda\nu}$ is called the Kostka number
and denotes the number of SSYT with weight $\nu$ and shape $\lambda$.
One can show that $K_{\lambda\nu}>0$ if and only if $\nu\prec \lambda$.
We also define here the {\em Schur polynomial} $s_\lambda(x)$,
which is a homogeneous polynomial in $d$ variables of degree $\sum_i \nu_i = n$.
Because the character $\tr \bq_\lambda(X)$ depends only on the eigenvalues,
we will overload notation and denote this character also by
$s_\lambda(X)$.
For the same reason, it follows that $s_\lambda(XY) = s_\lambda(YX)$.
The number of terms of the Schur polynomial is equal to
\[
s_\lambda(\id_d) =
\tr \bq_\lambda( \mathrm{id}_{d} ) =
 \dim \cQ_\lambda = \prod_{i < j} \frac{\lambda_i - \lambda_j + j - i}{j-i}.
\]

\section{Bound on Schur polynomials}
%
\begin{lem}
Let $\rho$ and $\sigma$ be $d \times d$ density matrices.
Suppose $\rho$ has rank $r$.
Then, the character function $s_\lambda$ of
the unitary group representation labeled by Young diagram $\lambda$
satisfies
\begin{align}
s_\lambda( \rho \sigma )
&
\begin{cases}
\le  (\dim \cQ_\lambda) e^{-2nH(\bar \lambda)} F^{2n} \\
= 0 \quad \text{if} \quad \lambda_{r+1} > 0 ,
\end{cases}
\label{eq:chi-bound}
\end{align}
where
\begin{align}
 F = F( \rho, \sigma ) = \tr \sqrt{\sqrt{\rho}\ \sigma \sqrt{\rho}}
\end{align}
is the fidelity, and $H(\bar \lambda) = -\sum_i \bar \lambda_i \ln \bar \lambda_i$
is the Shannon entropy of $\bar \lambda = \lambda /n$.
\label{lem:chi-bound}
\end{lem}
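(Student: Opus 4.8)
The plan is to reduce the character $s_\lambda(\rho\sigma)$ to a Schur polynomial evaluated at the spectrum of a manifestly positive semidefinite matrix, and then bound that polynomial monomial by monomial. First I would use the cyclicity $s_\lambda(XY)=s_\lambda(YX)$ established above, with $X=\sqrt\rho$ and $Y=\sqrt\rho\,\sigma$, to write $s_\lambda(\rho\sigma)=s_\lambda(M)$ where $M=\sqrt\rho\,\sigma\sqrt\rho$ is positive semidefinite. This does two things at once: the eigenvalues $y_1,\dots,y_d\ge 0$ of $M$ are real and nonnegative, so the monomial estimates below make sense, and the fidelity becomes the elementary quantity $F=\tr\sqrt M=\sum_i\sqrt{y_i}$. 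Writing $z_i=\sqrt{y_i}$, the content of the lemma is an inequality comparing $s_\lambda(M)$ to $\left(\sum_i z_i\right)^{2n}$.

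Next I would expand $s_\lambda(M)=\sum_{\nu}K_{\lambda\nu}\prod_i z_i^{2\nu_i}$ using Eq.~\eqref{eq:Schur-polynomial}, where the sum effectively runs over weights with $\nu\prec\lambda$, and bound each monomial separately. For a fixed $\nu$ put $\bar\nu=\nu/n$, a probability vector, and apply the weighted AM--GM inequality $\prod_i(z_i/\bar\nu_i)^{\bar\nu_i}\le\sum_i z_i$ (terms with $\bar\nu_i=0$ dropped). Multiplying through by $\prod_i\bar\nu_i^{\bar\nu_i}=e^{-H(\bar\nu)}$ rearranges this to $\prod_i z_i^{\bar\nu_i}\le e^{-H(\bar\nu)}\sum_i z_i$; raising to the $2n$ power gives $\prod_i z_i^{2\nu_i}\le e^{-2nH(\bar\nu)}F^{2n}$. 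At this stage the exponent still involves $H(\bar\nu)$ rather than $H(\bar\lambda)$.

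The link to $\lambda$ comes from the majorization constraint. Only weights with $K_{\lambda\nu}>0$, equivalently $\nu\prec\lambda$, contribute; since Shannon entropy is Schur-concave, $\bar\nu\prec\bar\lambda$ forces $H(\bar\nu)\ge H(\bar\lambda)$, hence $e^{-2nH(\bar\nu)}\le e^{-2nH(\bar\lambda)}$. Substituting this uniform bound into every monomial and summing, I obtain
\begin{align*}
s_\lambda(M)\le e^{-2nH(\bar\lambda)}F^{2n}\sum_\nu K_{\lambda\nu}
=(\dim\cQ_\lambda)\,e^{-2nH(\bar\lambda)}F^{2n},
\end{align*}
using $\sum_\nu K_{\lambda\nu}=s_\lambda(\id_d)=\dim\cQ_\lambda$, which is the claimed estimate.

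Finally, for the vanishing claim I would observe that $M$ has rank at most $r$ because $\sqrt\rho$ does, so at most $r$ of the $y_i$ are nonzero; a monomial $y^\nu$ is then nonzero only when $\nu$ is supported on those $\le r$ indices, i.e.\ when the contributing SSYT of shape $\lambda$ use at most $r$ distinct symbols. But $\lambda_{r+1}>0$ means $\lambda$ has a column of length exceeding $r$, which cannot be filled with strictly increasing entries from only $r$ symbols; no such SSYT exists, every nonvanishing monomial has coefficient $K_{\lambda\nu}=0$, and $s_\lambda(M)=0$. I expect the main obstacle to be the monomialwise bound combined with the Schur-concavity step: making the fidelity appear as $\left(\sum_i\sqrt{y_i}\right)^{2n}$ with the correct entropy factor $H(\bar\lambda)$ rather than $H(\bar\nu)$ is exactly where the AM--GM inequality and the majorization $\nu\prec\lambda$ must be played against each other, and it is the only place where the hypothesis $\nu\prec\lambda$ (rather than a crude count of terms) does real work.
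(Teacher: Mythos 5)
Your proposal is correct and follows essentially the same route as the paper: both reduce $s_\lambda(\rho\sigma)$ to the Schur polynomial of the positive semidefinite matrix $\sqrt\rho\,\sigma\sqrt\rho$ so that $F$ enters as $\tr\sqrt{M}$, bound the polynomial by $\dim\cQ_\lambda$ times a per-monomial estimate whose only real input is the majorization $\nu\prec\lambda$, and dispose of the $\lambda_{r+1}>0$ case by a rank argument. The only cosmetic differences are in the ordering of steps: you bound every monomial by weighted AM--GM and then invoke Schur-concavity of $H$ to pass from $H(\bar\nu)$ to $H(\bar\lambda)$, whereas the paper first identifies the largest monomial $x^\lambda$ via $\max_{\nu\prec\lambda}x^\nu=x^\lambda$ and writes it as $e^{-2nH(\bar\lambda)}e^{-2nD(\bar\lambda\|\bar x)}F^{2n}$ with $D\geq 0$ (and $D=+\infty$, rather than your SSYT-counting argument, handling the vanishing case) --- the same majorization fact deployed in a slightly different guise.
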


\begin{proof}
Consider a positive semi-definite matrix $X$ and a number $k \ge 0$.
The largest term in the Schur polynomial $s_\lambda(X^k)$ at eigenvalues $x_1\ge \cdots \ge x_d \ge 0$ of $X$ is
\[
 x_1^{k \lambda_1} \cdots x_d^{k \lambda_d} = e^{-n k H(\bar \lambda) } e^{-n k D( \bar \lambda \| \bar x )} (\tr X)^{kn}
\]
where $\bar x = (x_1, \ldots, x_d)/\tr(X)$, and $D(p\|q)=\sum_i
p_i\ln(p_i/q_i)$ is the relative entropy.
This is because majorization implies that
\[
\max_{\nu \prec \lambda} x^\nu = x^\lambda,
\]
i.e.~the maximum is attained by putting the largest number $x_1$ with
the largest possible exponent $\nu_1 = \lambda_1$
and the second largest $x_2$ with $\nu_2 = \lambda_2$ and so on,
subject to the majorization condition $\nu \prec \lambda$.

It follows that
\begin{align}
s_\lambda( X^k ) \le \dim \cQ_\lambda \cdot e^{-n k H(\bar \lambda) } e^{-n k D( \bar \lambda \| \bar x ) } ( \tr X )^{kn} .
\end{align}
Now, we set $X = \sqrt{\sqrt{\rho}\  \sigma \sqrt{\rho}}$ and observe
$s_\lambda( \rho \sigma ) = s_\lambda( X^2 )$.  Using the fact that
$D(\bar \lambda \| \bar x)$ is always non-negative and
$=+\infty$ when the rank of $\bar \lambda$ is larger than that of
$\bar x$, we arrive at Eq.~\eqref{eq:chi-bound}
\end{proof}

Note that since $s_\lambda( \bar \lambda )$ is a sum of non-negative terms,
it is lower bounded by its largest term:
\begin{align}
s_\lambda ( \bar \lambda ) \ge e^{-n H(\bar \lambda)} .
\label{eq:chi-lower-bound}
\end{align}

\section{Tomography}

Suppose we are given with $\rho^{\otimes n}$, $n$ copies of an unknown density matrix $\rho$.
What is the best strategy to learn about $\rho$?
The input state has a trivial symmetry $\mathbb S_n$ under the permutations of the tensor factors.
So, the POVM elements of the optimal strategy can be taken to commute
with $P_\pi$ without loss of generality.  Additionally since we do not
assume any distribution over $\rho$, our measurement should not
perform differently when $\rho$ is replaced by $U\rho U^\dag$.  This
means that if $M_\sigma$ is the outcome corresponding to $\sigma$ then
we should have
\[
M_{U\sigma U^\dag} = (U^\dag)^{\ot n} M_\sigma U^{\ot n}.
\]
These observations, along with the Schur-Weyl decomposition,
motivate us to define positive semi-definite operators
\begin{equation}
  \label{eq:POVM1}
  M(\lambda,U) := \frac{\dim \cQ_\lambda}{s_\lambda( \bar \lambda )}
  \Pi_\lambda (U \bar \lambda U^\dagger)^{\otimes n} \Pi_\lambda ,
\end{equation}
for each unitary $U$ and Young diagram $\lambda$ that partitions $n$ with at most $d$ rows.
As before, $\bar \lambda$ denotes the diagonal matrix with entries
$\lambda /n$.

We first show that the $M(\lambda,U)\rd U$ constitute a POVM,
where $\rd U$ is the Haar probability measure on $\mathbb U(d)$.
It suffices to check $\int \rd U M(\lambda,U) = \Pi_\lambda$,
for $\sum_\lambda \Pi_\lambda = I$.
Since $\int \rd U M(\lambda, U)$ is invariant under any unitary conjugation or permutation,
we only need to check the traces of both sides.
\begin{align*}
\int \rd U \tr M(\lambda,U)
&= \frac{\dim \cQ_\lambda \dim \cP_\lambda}{s_\lambda( \bar \lambda )} \int \rd U  \tr \bq_\lambda(U \bar \lambda U^\dagger) \\
&= \frac{\dim \cQ_\lambda \dim \cP_\lambda}{s_\lambda( \bar \lambda )} \int \rd U \tr \bq_\lambda(\bar \lambda) \\
&= \tr \Pi_\lambda
\end{align*}

\begin{figure}
\includegraphics[width=.45\textwidth]{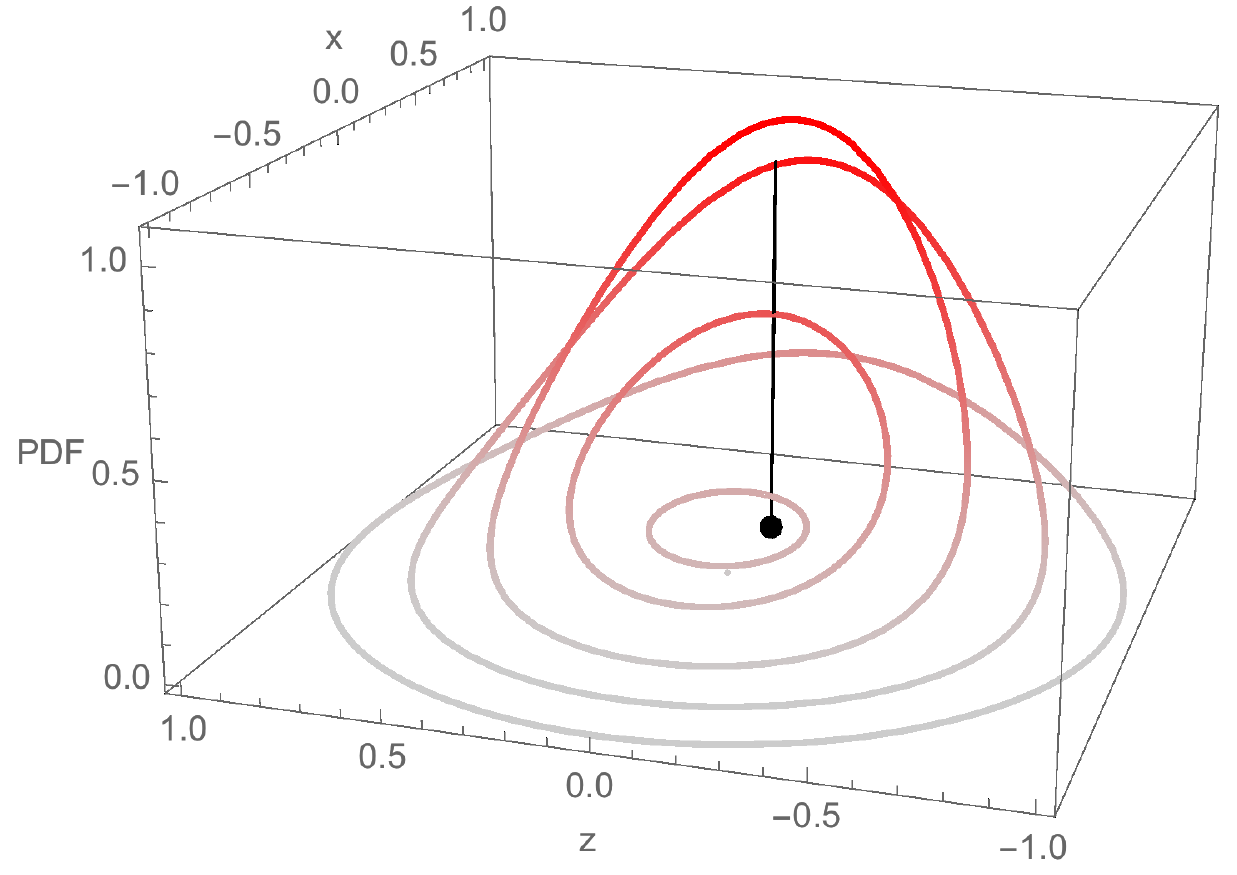}
(a)~$n=10$
\includegraphics[width=.45\textwidth]{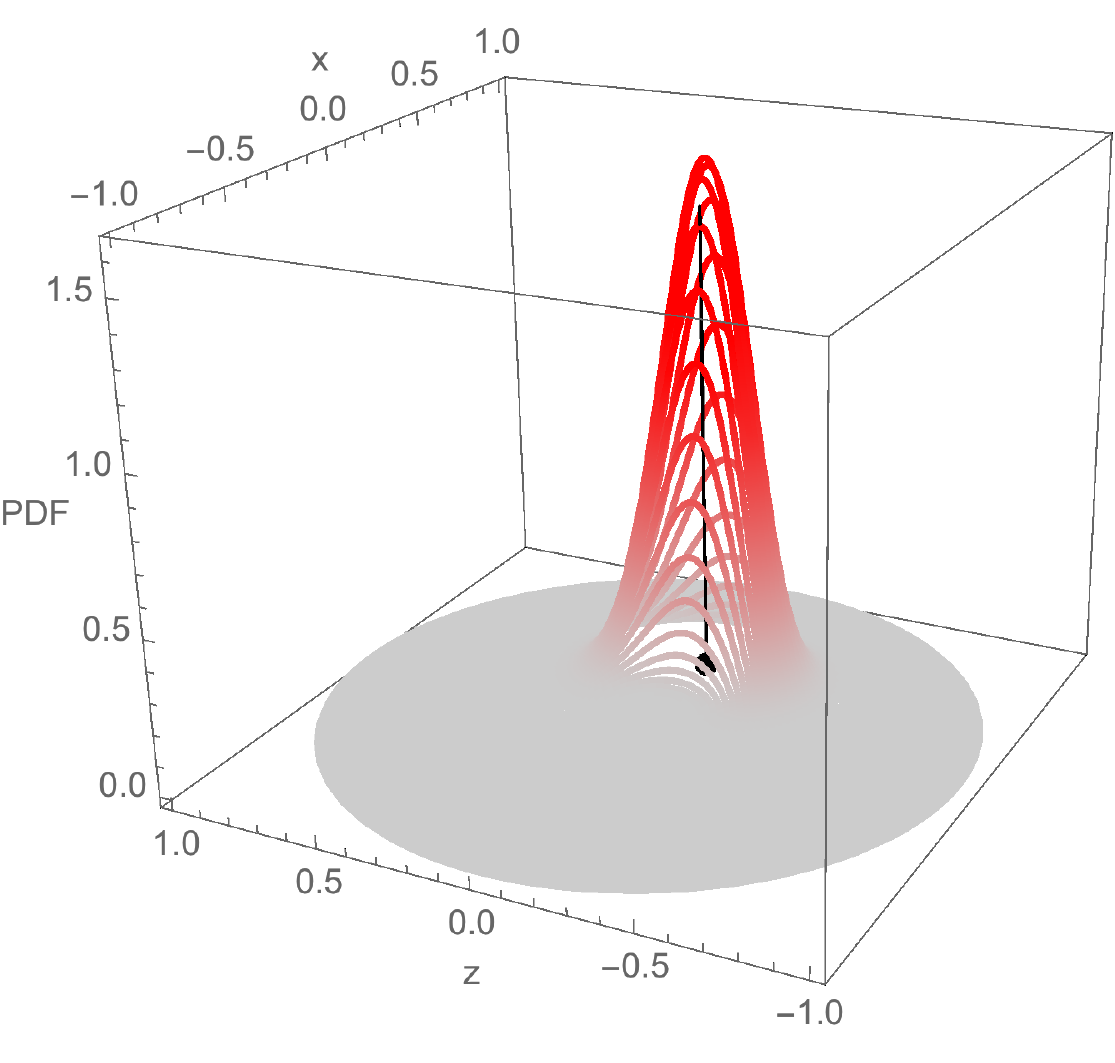}
(b)~$n=100$
\caption{Measurement outcome probability density functions (PDF) of
the POVM in Eq.~\eqref{eq:POVM1} on $n$ copies
of a qubit state $\rho = 0.7 \proj{0} + 0.3 \proj{1}$,
represented by the black delta function.
The PDF is plotted over the Bloch states
$\hat \rho = \frac12(I+z \sigma_z + x \sigma_x)$ with $x^2 + z^2 \le 1$,
and is zero except on the circles
because output states are of form $U \bar \lambda U^\dagger$
where $\bar \lambda$ is a density matrix from a discrete set,
which becomes finer as $n$ increases.
Red is the confidence region, which becomes small for large $n$.
}
\label{fig:UB}
\end{figure}

Next, we bound the probability density of measuring $M(\lambda, U)$.
Let $F = F(\rho, U \bar \lambda U^\dagger)$ be the fidelity.
We claim
\begin{align}
\tr( M(\lambda,U)\rho^{\otimes n}) \le (n+1)^{2dr} F^{2n},
\label{eq:meas-prob}
\end{align}
where $r$ is the rank of $\rho$.

To show this, we need a bound on $\dim \cP_\lambda$:
\begin{align}
 \dim \cP_\lambda \le e^{n H(\bar \lambda) } , \label{eq:Sn-dim-bound}
\end{align}
which has implicitly appeared in \cite{ChristandlMitchison2006}.
This follows from
\begin{align}
 \dim \cP_\lambda \prod_i \bar \lambda_i^{\lambda_i}
 \le \frac{n!}{\prod_i \lambda_i!} \prod_i \bar \lambda_i^{\lambda_i}
 = \frac{n!}{n^n} \frac{\prod_i \lambda_i^{\lambda_i}}{\prod_i \lambda_i!}
 \le 1.
 \label{eq:Sn-dim-bound-pf}
\end{align}
The first inequality is by the ``hook length formula''~\cite{FultonHarris}.
For the last inequality we note that the function $f(z) = z \ln z - \ln \Gamma(z+1)$
satisfies $f(0)=0$ and $f ''(z) > 0$ for $z  > 0$~\cite{Batir2008Inequalities}.
Hence, $\sum_{i=1}^d f(\lambda_i)$ with $\sum_{i=1}^d \lambda_i = n$ is maximum
if and only if $\lambda_1 = n$, in which case the inequality is saturated.

Eqs.~\eqref{eq:chi-lower-bound} and \eqref{eq:Sn-dim-bound} now imply that
\begin{align*}
\tr(M(\lambda,U) \rho^{\otimes n} )
&=  \frac{\dim \cQ_\lambda \dim \cP_\lambda}{s_\lambda( \bar \lambda )}
  s_\lambda( \rho U \bar \lambda U^\dagger ) \\
&\le \dim \cQ_\lambda \cdot e^{2n H(\bar \lambda)} s_\lambda( \rho U \bar \lambda U^\dagger ).
\end{align*}
By Eq.~\eqref{eq:chi-bound}, this is nonzero only if $\lambda_{r+1}=\lambda_{r+2}= \cdots =\lambda_d = 0$.
In this case, we have $\dim \cQ_\lambda \le (n+1)^{dr}$, and arrive at
Eq.~\eqref{eq:meas-prob}.

The output of our POVM is $\hat \rho = U \bar \lambda U^\dagger$.
The probability of obtaining $\hat \rho$ where $\hat \rho$ has small fidelity,
say infidelity $\delta$, to the true state $\rho$
can be estimated by integrating Eq.~\eqref{eq:meas-prob} over all pairs $(\lambda, U)$
such that $F(\rho, U \bar \lambda U^\dagger) \le 1 - \delta$.
Since $\sum_\lambda \int  \rd U < (n+1)^d$, we see that
\begin{align}
\Pr[ ~F(\hat \rho, \rho) \leq 1 - \delta~ ] \leq  (n+1)^{3dr} e^{-2n \delta} .
\end{align}

\subsection{Pretty Good Measurement}

Here we propose another POVM that achieves the same (up to constants)
sample-complexity for tomography.

Recall that given an ensemble
$\{(p_1,\phi_1), \ldots, (p_m, \phi_m)\}$, the PGM has measurement
operators $M_i := \bar\phi^{-1/2} p_i \phi_i \bar \phi^{-1/2}$
with $\bar\phi := \sum_i p_i \phi_i$~\cite{PGM}.
A relevant ensemble for us is the one in which $\phi_i$ is equal to $\sigma_i^{\otimes n}$,
and the index $i$ should run over all state space;
our ensemble is determined by $n$ and a probability measure $\rd \sigma$ on the whole state space $\{ \sigma \}$.
Demanding the unitary invariance of $\rd \sigma$,
we have
\begin{align}
 \bar \phi &= \int \rd \sigma ~\sigma^{\otimes n}
 = \sum_\lambda \frac{ \int \rd \sigma s_\lambda( \sigma ) }{ \dim \cQ_\lambda } \Pi_\lambda , \nonumber\\
M_\sigma \rd \sigma
 &= \sum_\lambda \frac{\dim \cQ_\lambda}{\E s_\lambda}
     \Pi_\lambda \sigma^{\otimes n} \Pi_\lambda \rd \sigma  ,\label{eq:POVM2}
 \end{align}
where $\E s_\lambda = \int \rd \sigma s_\lambda(\sigma)$.
It follows that the probability density of measuring $M_\sigma$ given a state $\rho$ of rank at most $r$ is
\begin{align*}
 \tr(M_\sigma \rho^{\otimes n} ) \rd \sigma
 &= \sum_\lambda \frac{(\dim \cQ_\lambda \cdot \dim \cP_\lambda )s_\lambda( \sigma \rho )}{\E s_\lambda} \rd \sigma \\
 &\leq
 \sum_{\lambda : \lambda_{r+1}=0}
 \frac{(\dim \cQ_\lambda)^2}{e^{n H(\bar \lambda)} \E s_\lambda } F^{2n} \rd \sigma
\end{align*}
where the inequality is by Eq.~\eqref{eq:chi-bound} and \eqref{eq:Sn-dim-bound}.
This is the same scaling in $n$ up to constants as Eq.~\eqref{eq:meas-prob},
provided
\[
e^{n H(\bar \lambda)} \E s_\lambda \geq (nd)^{-O(dr)} .
\]
Indeed we show that this is the case
if we choose a uniform distribution over the simplex of spectra of $\sigma$.
First, we bound the Schur polynomial by its largest term:
\begin{align*}
\int \rd \sigma s_\lambda( \sigma )
 &\ge
 \frac{1}{f_d(\vec \lambda = 0)}\underbrace{
 \int_{s_i\ge 0,~ \sum_i s_i = 1}  s_1^{\lambda_1} \cdots s_d^{\lambda_d}  \rd s}_{f_d(\vec \lambda)} .
\end{align*}
By writing the integral explicitly, we see that
\begin{align*}
f_d( \lambda_1,\ldots,\lambda_d )
&= f_2\left(\lambda_1, d-2+\sum_{i=2}^d \lambda_i\right) f_{d-1}(\lambda_2, \ldots, \lambda_d),\\
f_2( a, b) &= \frac{a! b!}{(a+b+1)!} \quad \quad \text{(Eq.~\eqref{eq:Dirichlet} below)}.
\end{align*}
This implies that $f_d(\vec \lambda) = \lambda_1 ! \cdots \lambda_d ! / ( n+d-1 )!$.
(We just calculated the normalization factor for the Dirichlet distribution.)
Hence,
\begin{align*}
e^{n H(\bar \lambda)} \int \rd \sigma ~s_\lambda( \sigma )
&\geq
e^{n H(\bar \lambda)}  \frac{\lambda_1! \cdots \lambda_d ! (d-1)!}{(n+d-1)!}\\
&\geq
(n+d)^{-d} ,
\end{align*}
where in the second inequality we use Eq.~\eqref{eq:Sn-dim-bound-pf}.
We conclude that this PGM defined by the uniform spectrum distribution
achieves the same bound (up to constants) on the sufficient number of copies for tomography.

Both POVMs in Eqs.~\eqref{eq:POVM1} and~\eqref{eq:POVM2}
are inspired by the pretty good measurement, and
indeed the measurement operator corresponding to the estimate $\sigma$ is like a
distorted version of $\sigma^{\ot n}$.
Variants of the PGM have
been proposed in which the measurement operators are distorted
versions of higher powers of the state $p_i\sigma_i$, i.e.~$M_i =
X^{-1/2} (p_i\sigma_i)^k X^{-1/2}$ where $X \equiv \sum_i (p_i
\sigma_i)^k$.  When $k=1$ this is the PGM, but the cases $k=2$ and
$k=3$ have also been found useful in specific settings; see
\cite{Tyson09} for a review. If we take $k\ra \infty$ here then
this corresponds precisely to the Keyl ``rotated-highest-weight''
strategy~\cite{Keyl06}.  It is possible that this framework could be used to
formally compare the performance of these different strategies.

\begin{proof}[A definite integral]
Here we show for $a,b>0$
\begin{align}
\int_0^1 u^{a-1} (1-u)^{b-1} \rd u = \frac{\Gamma(a)\Gamma(b)}{\Gamma(a+b)}.
\label{eq:Dirichlet}
\end{align}
Let $x = ut \ge 0$ and $y = (1-u)t \ge 0$.
The Jacobian is $|\partial(x,y)/\partial(u,t)| = t$.
Then,
\begin{align*}
\Gamma(a) \Gamma(b)
&= \int_0^\infty \int_0^\infty \rd x \rd y ~x^{a-1} y^{b-1} e^{-x-y}\\
&= \int_0^1 \rd u \int_0^\infty \rd t ~t^{a+b-1} u^{a-1}(1-u)^{b-1} e^{-t}\\
&= \Gamma(a+b) \int_0^1 \rd u ~u^{a-1} (1-u)^{b-1}.
\end{align*}
\end{proof}

\section{Lower bounds} \label{sec:net}

\begin{thm}\label{thm:LB}
Let $\eps \in (0,1)$ and $\eta \in (0,1)$.
Suppose there exists a POVM $\{M_{\sigma} \rd\sigma\}$ on
$(\bbC^d)^{\otimes n}$ such that for
any state $\rho \in \mathbb C^{d \times d}$ with rank $\leq r$,
\begin{align}
 \int_{\frac 1 2 \|\sigma - \rho\|_1 \leq \eps/2} \rd \sigma
   \tr[M_\sigma\rho^{\otimes  n}]
   \geq 1-\eta.
\label{eq:meas-too-good}
\end{align}
Then,
\[
n\geq C \frac{dr}{\eps^2} \frac{(1- \eps)^2}{\ln( d /r\epsilon)}
\] for $C$ a constant depending only on $\eta$.
In addition, if $r = d$, then
\[
n \geq C \frac{d^2}{\eps^2} (1-\eps)^2
\] for $C$ a constant depending only on $\eta$.
\end{thm}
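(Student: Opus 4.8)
The plan is to prove the lower bound by a \emph{packing} argument combined with a data-processing / Holevo-information inequality, turning the tomography guarantee \eqref{eq:meas-too-good} into a state-discrimination task on an exponentially large net of rank-$r$ states. First I would construct a net $\{\rho_1,\dots,\rho_N\}$ of rank-$\le r$ density matrices that are pairwise separated by more than $\eps$ in trace distance, yet all lie within a ball of radius $O(\eps)$ of a fixed reference $\rho_0=\frac1r\sum_{k\le r}\proj{k}$. The tangent directions at $\rho_0$ that stay rank $r$ are the support--complement off-diagonal blocks (real dimension $\approx 2r(d-r)$) together with the within-support block ($r^2-1$), for a total of $2rd-r^2-1=\Omega(dr)$; for $r=d$ one uses all $d^2-1$ directions. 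Since this dimension is $\Theta(dr)$ (resp.\ $\Theta(d^2)$), a maximal $\eps$-separated subset of a radius-$c\eps$ ball in this tangent space — which is automatically an $\eps$-net — has $\ln N=\Omega(dr)$ (resp.\ $\Omega(d^2)$) by the volumetric covering bound. The $(1-\eps)^2$ and the exact constants arise from keeping the perturbed matrices positive and genuinely $\eps$-separated.

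Next I would turn the hypothesized POVM into a decoder. Given $\rho_i^{\otimes n}$, measuring $\{M_\sigma\rd\sigma\}$ produces an estimate $\sigma$ within $\eps/2$ of $\rho_i$ with probability $\ge 1-\eta$ by \eqref{eq:meas-too-good}; rounding $\sigma$ to the nearest net point returns the correct index $\hat\imath=i$ with probability $\ge 1-\eta$, since the net is $\eps$-separated. Thus the classical channel $i\mapsto\hat\imath$ succeeds with probability $\ge 1-\eta$, and Fano's inequality gives $I(i:\hat\imath)\ge(1-\eta)\ln N-\ln 2$. By data processing and the Holevo bound, $I(i:\hat\imath)$ is at most the Holevo information $\chi$ of the uniform ensemble $\{\tfrac1N,\rho_i^{\otimes n}\}$.

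The crux, and the step I expect to be the main obstacle, is the upper bound on $\chi$. Using $\chi\le\frac1N\sum_i D(\rho_i^{\otimes n}\|\tau)$ for the free choice $\tau=\bar\rho^{\otimes n}$ with $\bar\rho=\frac1N\sum_i\rho_i$ (suitably smoothed to have full support), additivity gives $\chi\le n\,\E_i D(\rho_i\|\bar\rho)$. The naive reference $\rho_0$ is useless here, since rotated rank-$r$ states have mismatched supports and $D=+\infty$; the whole difficulty is to choose $\bar\rho$ and control how much each rotated support overlaps the small eigenvalues that the off-diagonal perturbations leak into the $(d-r)$-dimensional complement. A careful eigenvalue analysis should yield $\E_i D(\rho_i\|\bar\rho)\le O\!\big(\eps^2\ln(d/r\eps)\big)$, where the logarithm is precisely the price of those leaked small eigenvalues.

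Combining $\Omega(dr)\le\ln N\le\chi\le n\,O\!\big(\eps^2\ln(d/r\eps)\big)$ and solving for $n$ yields $n\ge\Omega\!\big(\tfrac{dr}{\eps^2}/\ln(d/r\eps)\big)$. In the case $r=d$ there is no complement and hence no leakage: one takes $\rho_0=I/d$, perturbs in all $d^2-1$ directions, and the relative entropy against $\tau=I/d$ reduces to a quadratic form $D(\rho_i\|I/d)\approx\tfrac d2\|\rho_i-I/d\|_2^2=O(d\eps^2)$ with no logarithm, so that $\Omega(d^2)\le\chi\le n\,O(d\eps^2)$ gives $n\ge\Omega(d^2/\eps^2)$ as claimed.
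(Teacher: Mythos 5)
You have reproduced the paper's architecture exactly: a packing net of rank-$r$ states in a small ball, the tomography guarantee converted into a decoder, Fano's inequality giving $I(X{:}Y)\ge(1-\eta)\ln N-\ln 2$, and the Holevo bound with $\chi\le n\chi_0$ (your formulation $\chi\le n\,\E_i D(\rho_i\|\bar\rho)$ is equivalent to the paper's entropy-difference computation $\chi_0=S(\E_U U\rho U^\dagger)-S(\rho)$, and you correctly identify that the $\ln(d/r\eps)$ factor is the price of the small eigenvalues leaked into the support complement). The gap is in how the net is produced, and it is not cosmetic. In the full-rank case your own numbers do not give the claimed bound: from $\ln N=\Omega(d^2)$ and $\chi_0=O(d\eps^2)$ you get only $n\ge\Omega(d/\eps^2)$, a factor of $d$ short. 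To obtain $\chi_0=O(\eps^2)$ you need every net state to satisfy $\|\rho_i-I/d\|_2=O(\eps/\sqrt d)$ while pairwise $\|\rho_i-\rho_j\|_1\ge\eps$, i.e., the differences must have nearly flat spectrum, saturating $\|X\|_1\approx\sqrt d\,\|X\|_2$. A maximal $\eps$-separated subset of a trace-norm ball of radius $c\eps$ does not certify this spreadness; generic members of such a packing can have low-rank differences with $\|X\|_2\approx\|X\|_1$, which is exactly the regime where your relative-entropy bound blows up to $O(d\eps^2)$. The paper sidesteps this by taking Haar-random unitary orbits of a \emph{fixed-spectrum} state $\tau_{t,U}$ as in Eq.~\eqref{eq:tau} with $t\sim\eps$: Schur's lemma pins the ensemble average, so $\chi_0\le\ln 2-H\left(\tfrac{1+t}{2}\right)\le t^2$ exactly, while the concentration bound of Lemma~\ref{lem:projector-overlap} shows two independent rotations are $\Omega(t)$-separated in trace norm except with probability $e^{-\Omega(d^2)}$, whence a net of size $e^{d^2/32}$ by the greedy union-bound argument.

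The rank-$r$ case has the analogous defect. Your heuristic (tilted supports leak weight $O(\eps^2)$ into the complement) is correct and is precisely the mechanism of the paper's Packing Net I, where all states carry exactly weight $t^2$ in the complement and $\chi_0\le t^2\ln\frac{ed}{t^2r}$. But a generic rank-$r$ state at trace distance $\eps$ from $\rho_0$ can put weight $\Theta(\eps)$, not $\Theta(\eps^2)$, into the complement (e.g., shift $1/r$ of the spectrum onto a complement vector), giving $D(\rho_i\|\tau)=\Theta(\eps\ln(d/\eps))$ for a worst-case net member; so again "maximal packing in the ball" must be replaced by a construction with uniformly controlled leakage — a fixed-spectrum orbit or an explicit tilt parametrization — and the claim $\E_i D(\rho_i\|\bar\rho)\le O(\eps^2\ln(d/r\eps))$, which you flag as the crux but leave asserted, is exactly what the paper proves via the explicit block form of $\rho_{t,U}$ and the Haar average. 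Finally, the $(1-\eps)^2$ factor in the theorem does not come from positivity of local perturbations: it is supplied by a separate construction (Packing Net III, $t=1$, states maximally mixed on rotated $r$-dimensional subspaces) that handles $\eps$ close to $1$, a regime your radius-$O(\eps)$ local ball around a fixed $\rho_0$ does not reach.
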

This theorem implies that achieving infidelity $\delta = 1- F$ requires
$n\geq \tilde \Omega(dr/\delta)$. For both trace distance and fidelity these
lower bounds match our upper bounds up to the log factors.

Let us say that a POVM $M_\sigma$ on $(\bbC^d)^{\otimes n}$ is
an independent measurement if it is equal to the tensor product of $n$ POVM's
$M^{(a)}$ on $\bbC^d$. Then, we have
\begin{thm}\label{thm:LBprod}
Let $\delta \in (0,1)$ and $\eta \in (0,1)$.
Suppose there exists an independent measurement $M_\sigma \rd \sigma$ on
$(\bbC^d)^{\otimes n}$ such that for
any state $\rho \in \mathbb C^{d \times d}$ with rank $\leq r$,
\begin{align}
 \int_{1-F(\sigma,\rho) \leq \delta / 4} \rd \sigma
   \tr[M_\sigma \rho^{\otimes  n}]
   \geq 1-\eta.
\label{eq:meas-too-good-prod}
\end{align}
Then,
\[
n \geq C \frac{dr^2}{\delta^2 \ln (2/\delta)} (1- \delta)^4
\] for $C$ a constant depending only on $\eta$.
In addition, given $\epsilon \in (0,1)$,
if the independent measurement $M_\sigma \rd \sigma$ satisfies
\begin{align}
 \int_{\frac12 \| \rho - \sigma \|_1 \leq \epsilon / 2} \rd \sigma
   \tr[M_\sigma \rho^{\otimes  n}]
   \geq 1-\eta
\label{eq:meas-too-good-prod2}
\end{align}
for any state $\rho \in \mathbb C^{d \times d}$ of possibly full rank, then
\[
n \ge C \frac{d^3}{\epsilon^2} (1 - \epsilon)^2
\]  for $C$ a constant depending only on $\eta$.
\end{thm}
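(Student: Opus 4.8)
The plan is to exploit the product structure directly. Because $M_\sigma\rd\sigma=\bigotimes_{a=1}^n M^{(a)}$, measuring $\rho^{\otimes n}$ is a \emph{classical} experiment: the $a$-th copy returns an outcome drawn independently from $p^{(a)}_\rho(k)=\tr[E^{(a)}_k\rho]$, and $\hat\rho$ is a function of these $n$ independent outcomes. A hypothesis of the form~\eqref{eq:meas-too-good-prod} or~\eqref{eq:meas-too-good-prod2} says this classical procedure returns, with probability $\ge 1-\eta$, an estimate within the prescribed ball; so if I fix a finite family of states all lying inside one accuracy-ball of a reference state yet pairwise separated by the full accuracy radius, the procedure must identify which member was given. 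I therefore set up Fano's inequality against such a \emph{local} packing: the mutual information between a uniform index and the outcome string must be at least $\log N-O(1)$, where $N$ is the family size. Because a local packing of a $P$-dimensional manifold by balls of half the radius has $\log N=\Theta(P)$, the count carries no spurious logarithm.

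The quantitative core is an upper bound, \emph{uniform over all single-copy POVMs}, on the per-copy information. Writing Fano as $n\,\overline{D}\ge\Theta(P)$ with $\overline{D}$ the per-copy Kullback--Leibler divergence averaged over pairs, I bound $D$ by the $\chi^2$-divergence $\sum_k\tr[E_k(\rho_u-\rho_v)]^2/\tr[E_k\rho_v]$ and then \emph{average over a unitarily-random packing}. Since refining a POVM into rank-one elements only increases the extractable information, it suffices to treat rank-one measurements, for which $\sum_k\tr(E_k^2)/\tr(E_k)=d$. The key point is that a fixed measurement cannot be sensitive to all random directions at once: after averaging, the sensitivity is governed by this single dimensional factor, so $\overline{D}$ is smaller than the worst-case pairwise value by a factor of the dimension, and it is exactly this gap that separates product from collective schemes. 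For the full-rank bound~\eqref{eq:meas-too-good-prod2} I centre the packing at $\rho_0=I/d$ and choose the differences $\rho_u-\rho_v$ full-rank and spread, so that trace-distance separation $\epsilon$ corresponds to Frobenius separation $\Theta(\epsilon/\sqrt d)$; then $P=d^2-1$ while $\overline{D}=\Theta(\epsilon^2/d)$, and Fano gives $n\ge\Omega(d^3/\epsilon^2)$. The exponent $1/\epsilon^2$ is forced because trace distance is linear in $\rho_u-\rho_v$.

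The rank-$r$ infidelity bound~\eqref{eq:meas-too-good-prod} is the main difficulty, and I expect the boundary of state space to absorb most of the work. Here I build the packing from rank-$r$ states, converting the infidelity radius into a trace-distance radius through~\eqref{eq:FuchsGraaf} (infidelity $\delta$ corresponds to trace distance $\Theta(\sqrt\delta)$), and centre it at a rank-$r$ state such as $\Pi_r/r$ whose nonzero eigenvalues are $\Theta(1/r)$. Two boundary effects drive the extra factors. First, the $\chi^2$ denominators $\tr[E_k\rho_0]$ can be arbitrarily small for POVM elements probing the kernel of $\rho_0$; I must show such elements cannot help---they correspond to outcomes of probability $\approx 0$ that almost never fire---which requires truncating or regularizing the kernel directions, and this is where the logarithm $\ln(2/\delta)$ is generated. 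Second, the eigenvalue scale $1/r$ enters the averaged $\chi^2$, and since a difference of rank-$r$ states has rank $\le 2r$ its Frobenius separation is $\Theta(\sqrt{\delta/r})$; together with the $\Theta(rd)$ dimensions of the rank-$r$ manifold this produces $r^2$ rather than $r$. Assembling a local packing with $\log N=\Theta(rd)$ against a boundary-regularized averaged divergence then yields $n\ge\Omega\big(dr^2/(\delta^2\ln(2/\delta))\big)$, and carrying the explicit constants through the Fuchs--van de Graaf conversion and the $\chi^2$ estimate gives the stated $(1-\delta)^4$ and $(1-\epsilon)^2$ factors. The crux throughout is the uniform-over-measurements control of the kernel-probing terms, which is what makes the average divergence genuinely smaller than the naive worst case.
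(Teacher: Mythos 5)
Your full-rank argument is essentially the paper's: a local packing of $\exp(\Theta(d^2))$ states near $I/d$, Fano, subadditivity across the $n$ independent copies, reduction to rank-one POVM elements, a global twirl to justify averaging the packing over unitaries (legitimate because pairwise trace distances are unitarily invariant, so the rotated family is still a packing), and a per-copy information bound of $\Theta(\epsilon^2/d)$ coming from the $\Theta(1/d)$ variance of the overlap statistic; this correctly yields $n\ge\Omega(d^3/\epsilon^2)$ and matches Lemma~\ref{lem:indLB-tracedistance}.

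The rank-$r$ infidelity bound, however, has a genuine gap, and it is in the packing geometry rather than in the kernel-regularization you flag as the crux. By centering at $\Pi_r/r$ and converting infidelity $\delta$ to trace distance $\Theta(\sqrt\delta)$, you have placed the packing in the \emph{quadratic} fidelity regime (rotations of the support or spectral perturbations of $\Pi_r/r$ all have $1-F\sim T^2$). Your own estimates then give an averaged per-copy divergence of order $\|\Delta\|_2^2\sim\delta/r$ (up to the logarithm from kernel-probing outcomes), not $\delta^2/r$, so Fano over this packing yields only $n\ge\Omega\bigl(dr^2/(\delta\ln(1/\delta))\bigr)$ --- linear, not quadratic, in $1/\delta$. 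This is not a fixable bookkeeping issue: the Kueng--Rauhut--Terstiege scheme resolves trace distance $\sqrt\delta$ among rank-$r$ states with $O(dr^2/\delta)$ independent measurements, so the communication game over \emph{your} packing is winnable at that sample size, and no information-theoretic argument over it can prove the stated $n\ge\Omega\bigl(dr^2/(\delta^2\ln(2/\delta))\bigr)$. The missing idea is a packing in the \emph{linear} fidelity regime, where Eq.~\eqref{eq:FuchsGraaf} sits at its lower end $1-F\approx T$: the paper uses the states $\omega_{t,U}$ of Eq.~\eqref{eq:deltaNet}, with a fixed spectator eigenvalue $1-t$ and a weight-$t$ rank-$r$ block rotated by $U\in\mathbb U(d-1)$, for which the pairwise infidelity is $t\bigl(1-F(\tau'_I,\tau'_U)\bigr)=\Theta(t)=\Theta(\delta)$ while the per-copy information is $\Theta\bigl((t^2/r)\ln(2/t)\bigr)$ --- quadratic in $\delta$ because the rotating feature carries only probability mass $t$. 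Your instinct about where the logarithm comes from is correct (in the paper it arises from low-probability outcomes via $\E_v\,[1/(t+v/3)]=O(\ln(1/t))$), but without the spectator-eigenvalue construction the $1/\delta^2$ scaling cannot be recovered.
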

Note that the fidelity lower bound implies trace distance bound
\[
n \geq C \frac{dr^2}{\epsilon^2 \ln (2/\epsilon)}.
\]

\begin{proof}
We will show that any measurement satisfying \eqref{eq:meas-too-good},
\eqref{eq:meas-too-good-prod}, or \eqref{eq:meas-too-good-prod2}
will imply the existence of a communication protocol that can reliably
send a large message.  Holevo's theorem~\cite{Holevo73} can then be
used to obtain a lower bound on $n$.
The independent measurement case is very similar and
will be explained at the end of this proof.

Following convention, call the sender Alice and the receiver Bob.
We will show in \lemref{eps-net} below that there exists a states
$\rho_1,\ldots,\rho_N$ each with rank $\leq r$ such that
\begin{align}
 \frac 1 2 \|\rho_i -\rho_j\|_1 > \eps\quad \forall i\neq j.
\label{eq:packing}
\end{align}
The set $\{\rho_1,\ldots,\rho_N\}$ is known as an $\eps$-packing net.
Fix such a net, along with
a measurement $\{M_\sigma\rd\sigma\}$ satisfying \eq{meas-too-good}.

We will now construct a communication protocol.  Alice will choose a
message $x\in [N] := \{1,\ldots,N\}$ which she will encode by sending
$\rho_x^{\otimes n}$.  Bob will use the state estimation scheme
$\{M_\sigma\}$ to attempt to guess $x$.  If $\sigma$ is within $\eps/2$
trace distance of some $\rho_{y}$ then Bob will guess $y$.
By \eq{packing}, there is always at most one $\rho_{y}$
satisfying this condition.  If no such $\rho_{y}$ exists, Bob
will output failure.   This results in the POVM with measurement
outcomes
\begin{align}
\tilde M_{y} &= \int_{\frac 1 2 \|\sigma-\rho_{y}\|_1 \leq  \eps/2} \rd \sigma M_\sigma \\
 \tilde M_{\text{fail}} &= \id - \sum_{y \in [N]} \tilde M_{y}.
\end{align}
Define $\Pr[y | x] = \tr[\tilde M_{y}\rho_x^{\ot n}]$.
From \eq{meas-too-good} we have that $\Pr[x|x] \geq 1-\eta$.  In other
words, Bob has a $\geq 1-\eta$ chance of correctly decoding Alice's
message.  By Fano's inequality~\cite{Fano61}, this implies that
\begin{align}
 I(X:Y) \geq (1-\eta)\ln(N) - \ln(2).
\end{align}

On the other hand, Holevo theorem~\cite{Holevo73} states that
$I(X:Y)\leq \chi$ where $\chi$ is the Holevo information:
\begin{align}
 \chi = S\left( \frac1N \sum_{x\in [N]} \rho_x^{\otimes n} \right) -
  \frac1N \sum_{x\in [N]} S(\rho_x^{\otimes n}) .
\label{eq:chi}
\end{align}
In \lemref{eps-net} below we will argue that there exists a packing
net with large $N$ and small $\chi$. Specifically, we will bound $\chi \leq n \chi_0$
where
\[
\chi_0 = S\left( \mathbb E_U U \rho_x U^\dagger \right) - S(\rho_x) ,
\]
for an appropriate Haar random unitary $U$, and prove $\chi_0 = \tilde O( \epsilon^2)$.
This will imply that
\[ n \geq \frac{(1-\eta)\ln(N) - \ln(2)}{\chi_0}. \]
Our result then follows from \lemref{eps-net} below.

\begin{figure}
\includegraphics[width=.47\textwidth]{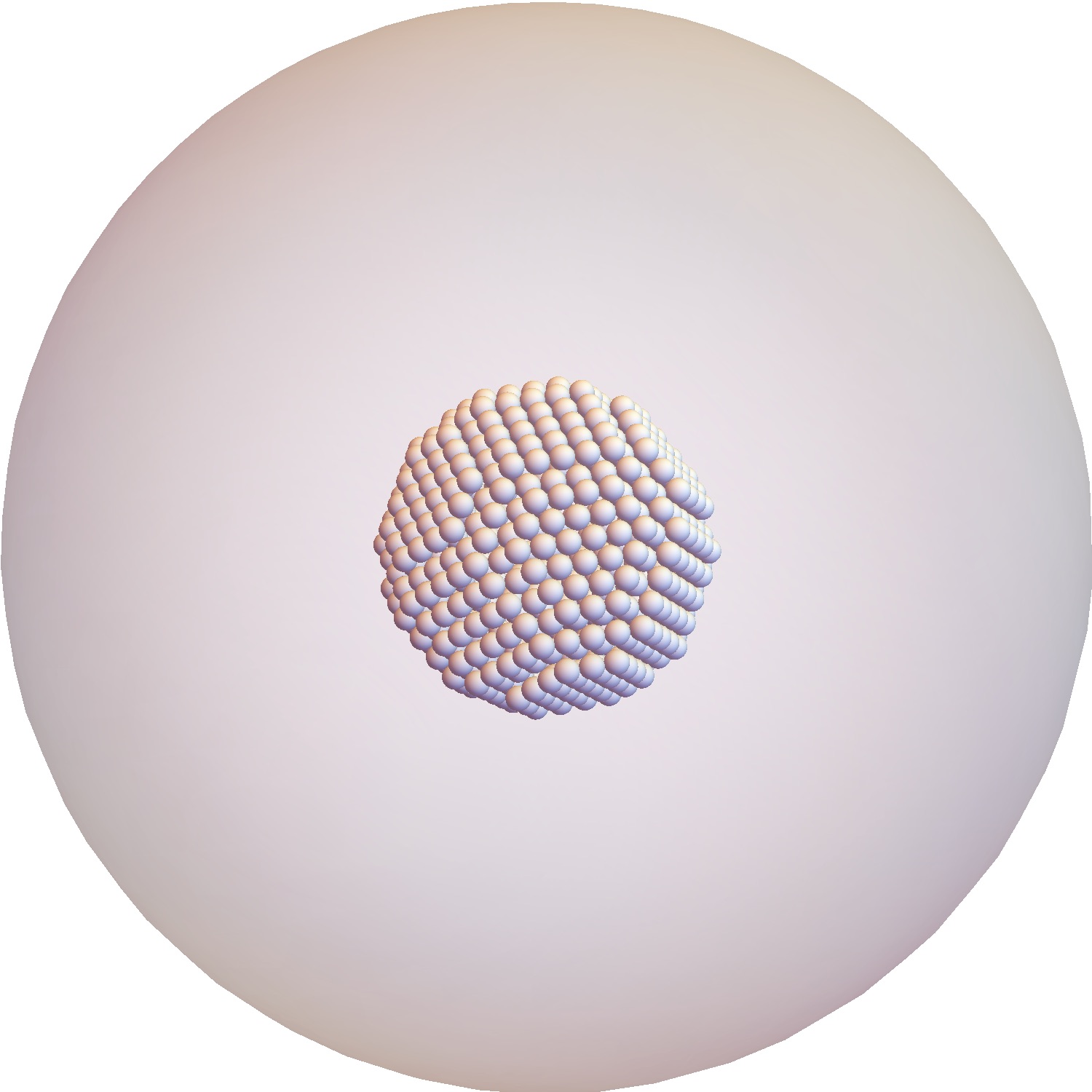}
\caption{Packing net in a small region of state space.
The centers of the small balls represent states $\rho_x$ that
are separated from each other by distance $a$,
which is larger than the resolution of tomography.
This enables a communication channel using $\rho_x$.
The packing net has diameter $10a \ll 1$, and contains $N=10^D$ balls,
where $D=d^2-1$ is the dimension of the state space,
making the channel capacity of order $D$.
Meanwhile, Holevo information for the packing net is proportional to $a^2$.
This establishes our lower bound on the general tomography.
For visualization purpose, we depict $D=3$ case
with the packing net centered around the maximally mixed qubit state.
}
\label{fig:packingball}
\end{figure}

For the independent measurements,
Bob has to infer the state based on the measurement outcome distribution
from each copy. Hence, the Holevo information must be calculated with respect
to the outcome distribution.
Since the construction of the states, and the calculation of Holevo information
are somewhat similar to those for the joint measurements,
we present a complete proof in Sec.~\ref{sec:indLB}
after the proof of Lemma~\ref{lem:eps-net}.
\label{incompletePf}
\end{proof}

\begin{lem}\label{lem:eps-net}
There exist $\eps$-packing nets I,II,III of $d$-dimensional states (i.e.~satisfying \eq{packing})
characterized in the following table.
\begin{center}
\begin{tabular}{|c|c|c|c|c|}
\hline
    & {\rm rank} & $ \chi_0/c \le $ & $c \ln N \ge$ & {\rm restriction}\\
\hline
I & $r$ & $\epsilon^2 \ln (d/r \epsilon)$ & $rd$ & $\epsilon \le 2^{-4}$, $ r < d/3$ \\
\hline
II & $d$ & $\epsilon^2$ & $d^2$ & $\epsilon \le 2^{-3}$, $d$ even \\
\hline
III & $r$ & $\ln(d/r)$ & $rd (1 - \epsilon )$& $r < d(1-\epsilon)/6$\\
\hline
\end{tabular}
\end{center}
where $c>0$ is a sufficiently large constant; $c=1000$ is good enough.
\end{lem}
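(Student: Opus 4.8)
The plan is to realize all three nets as \emph{unitary orbits} of a single base state: fix a density matrix $\rho_0$ of the appropriate rank and set $\rho_x = U_x\rho_0 U_x^\dagger$, so every $\rho_x$ has the same spectrum and $S(\rho_x)=S(\rho_0)$. The packing condition \eqref{eq:packing} then becomes a pairwise-separation statement for the $U_x$ in the metric $\tfrac12\|U\rho_0 U^\dagger - U'\rho_0 U'^\dagger\|_1$, while $\chi_0 = S(\mathbb{E}_U U\rho_0 U^\dagger)-S(\rho_0)$ is the entropy excess of the orbit average, the average being over the Haar measure of whichever group $\mathcal{G}$ we choose to rotate within; recall that subadditivity reduces the Holevo quantity in \thmref{LB} to $\chi\le n\chi_0$ once the net's single-copy average is arranged to match this Haar average. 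I would organize the construction around two regimes the table reflects: a \emph{global} regime (net III), where $\mathcal{G}=\mathbb{U}(d)$, and a \emph{local} regime (nets I and II), where the $U_x$ are confined to an $O(\epsilon)$-neighborhood of the identity so the states lie in a trace-ball of radius $\sim\epsilon$ about $\rho_0$.

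For net III I would take $\rho_0=\Pi/r$, the maximally mixed state on a fixed $r$-dimensional subspace, and rotate by full Haar measure. Then $\mathbb{E}_U U\rho_0 U^\dagger = I/d$, and since $I/d$ is the maximal-entropy state we get $\chi_0\le \ln d-\ln r=\ln(d/r)$ immediately. The count reduces to packing the Grassmannian $Gr(r,d)$: two generic $r$-subspaces give maximally mixed states at trace distance near $1$, so a standard volume (Gilbert--Varshamov) argument on a manifold of real dimension $\sim 2r(d-r)$ yields $\ln N\gtrsim rd$, with the $(1-\epsilon)$ factor and the restriction $r<d(1-\epsilon)/6$ coming from forcing the separation to exceed $\epsilon$ while keeping $d-r$ comparable to $d$.

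For nets I and II I would write $U_x=e^{iA_x}$ with $\|A_x\|=O(\epsilon)$ and average over a \emph{centered} ensemble of such $A$ (closed under $A\mapsto-A$). The packing count is again a volume argument, now inside the tangent space to the orbit of $\rho_0$: a ball of trace-radius $\sim 10\epsilon$ packed at separation $\epsilon$ contains $N\sim C^{D}$ points, where $D$ is the orbit dimension --- $D\sim d^2$ for a generic full-rank $\rho_0$ with a two-level spectrum of equal multiplicities (hence $d$ even, net II) and $D\sim rd$ for $\rho_0=\Pi/r$ (net I) --- giving $\ln N\gtrsim d^2$ resp.\ $rd$ with no $\ln(1/\epsilon)$ factor. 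The estimate of $\chi_0$ is the crux. Because the ensemble of $A$ is centered, the first-order term $\mathbb{E}\,i[A,\rho_0]$ vanishes, so $\bar\rho:=\mathbb{E}_U U\rho_0 U^\dagger = \rho_0-\tfrac12\mathbb{E}[A,[A,\rho_0]]+O(\epsilon^3)$; inserting this into the entropy and using $\tr\bar\rho=\tr\rho_0$ gives, to leading order,
\[
\chi_0 = \tfrac12\,\mathbb{E}\sum_{i,j}|A_{ij}|^2\,(\lambda_i-\lambda_j)(\ln\lambda_i-\ln\lambda_j)+\cdots,
\]
where the $\lambda_i$ are the eigenvalues of $\rho_0$, so each summand is $O(\epsilon^2)$ times a spectral weight. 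For the full-rank net II the weights are bounded and one obtains $\chi_0=O(\epsilon^2)$ cleanly.

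The hard part --- and the source of the extra $\ln(d/r\epsilon)$ in net I --- is that for $\rho_0=\Pi/r$ the spectrum is $1/r$ on the support and $0$ on the kernel, so the spectral weight $(\lambda_i-\lambda_j)(\ln\lambda_i-\ln\lambda_j)$ \emph{diverges} on the support--kernel pairs and the naive second-order expansion breaks down. I would handle this by noting that an $O(\epsilon)$ rotation lifts the kernel eigenvalues only to size $\sim\epsilon^2/r$ rather than to $0$, so the offending factor is cut off at $\ln\lambda_i-\ln\lambda_j\sim\ln(1/\epsilon)$; combining this cutoff with the $\sim d/r$ available support--kernel directions produces the factor $\ln(d/r\epsilon)$ and hence $\chi_0=O(\epsilon^2\ln(d/r\epsilon))$. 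Making this rigorous --- replacing the divergent expansion by an honest bound on $S(\bar\rho)-S(\rho_0)$ valid for the rank-deficient $\rho_0$ while simultaneously respecting the packing construction --- is where the real work lies; the volume counts and the net III computation are comparatively routine.
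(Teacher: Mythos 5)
Your overall framing (orbit nets, $\chi_0$ as the entropy excess of an orbit average) matches the paper's, but your execution diverges at exactly the point you flag as ``where the real work lies,'' and that point is a genuine gap, not a routine verification. Two problems. First, the net~I entropy bound: your cutoff heuristic is incorrect as stated, because a unitary rotation $e^{iA}\rho_0e^{-iA}$ does not lift any eigenvalue --- every state in your net has exactly the spectrum of $\rho_0=\Pi/r$. What must be bounded is $S(\bar\rho)-S(\rho_0)$ for the \emph{average} $\bar\rho$ of the finite net, and for rank-deficient $\rho_0$ the second-order expansion you write down is genuinely divergent; no per-state ``lifting'' argument repairs it. Second, and independently: in Theorem~\ref{thm:LB} the quantity $\chi_0$ is defined through a \emph{Haar} average, and the inequality $S\bigl(\tfrac1N\sum_x\rho_x\bigr)\le S\bigl(\mathbb{E}_U U\rho_0U^\dagger\bigr)$ linking the finite net to that average is not free. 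The paper obtains it because its net points are drawn from a \emph{group} $G$ and compared with Haar measure on $G$: conjugating the finite average by a Haar-random $V\in G$ preserves its entropy and, by invariance of Haar measure, averages to the Haar mean, so concavity applies. Your nets I and II live in an $O(\epsilon)$-neighborhood of the identity, which is not a group; averaging over all of $\mathbb{U}(d)$ instead yields only the useless bound $\chi_0\le\ln(d/r)$. Your clause ``once the net's single-copy average is arranged to match this Haar average'' is therefore doing unacknowledged, nontrivial work. Both gaps are repairable along your lines --- symmetrize the net under $A\mapsto-A$, and replace the Taylor expansion by $\chi\le\frac1N\sum_xS(\rho_x\,\|\,\sigma)$ for the hand-picked reference $\sigma=(1-\epsilon^2)\rho_0+\epsilon^2\frac{I-\Pi}{d-r}$, where each term needs only the kernel weight $\tr[(I-\Pi)\rho_x]=O(\epsilon^2)$ of each net state and evaluates to $O\bigl(\epsilon^2\ln(d/r\epsilon^2)\bigr)$ --- but neither repair appears in your proposal.

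For contrast, the paper avoids both issues with a different parametrization. For net I it rotates, by the \emph{full} group $\mathbb{U}(d-r)$, states that are maximally mixed on a tilted $r$-dimensional subspace in which only a $t^2$-fraction of the weight is exposed to the rotation; the Haar average is then exactly diagonal by Schur's lemma, giving $\chi_0\le H(t^2)+t^2\ln\frac{d-r}{r}$ with no expansion at all, while the pairwise separation $\Omega(t)$ and the count $\ln N\ge rd/54$ come from an explicit block computation, the concentration bound of Lemma~\ref{lem:projector-overlap}, and a probabilistic (rather than volumetric) packing argument. Nets II and III come from a single family $\tau_{t,U}$ specialized to $r=d/2$ and to $t=1$, respectively. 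Your net III and your volume-based counting are plausible substitutes (net III is essentially Szarek's net, which the paper cites), but the Holevo estimates --- precisely the component the paper identifies as its main new contribution in this lemma --- are the part your proposal leaves unresolved.
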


We remark that packing nets of size $\exp(\Omega(dr))$ for rank-$r$ states
have been achieved as early as 1981~\cite{Szarek1981Nets, Szarek1983finite};
see also \cite{Winter04,LVWW} which used
them for applications in communication complexity.  These imply an
$\Omega(dr)$ lower bound on the number of copies needed when $\eps$ is
constant~\cite{Winter04,LVWW,OW-tomo} and has been used in
\cite{FlammiaGrossLiuEtAl2012}
to argue an $\tilde\Omega(r^2d^2)$ lower bound on the number of copies needed
for constant accuracy using adaptive Pauli
measurements.
Our main new
contribution here is to analyze at the same time the Holevo capacity
corresponding to these ensembles, in order to obtain bounds
with simultaneously optimal scaling with $r$, $d$ and $\eps$.

\subsection{Probabilistic existence argument}

We will define a set of states $\rho_U = U \rho_I U^\dagger$
where $U$ is any element of some subgroup $G \subseteq \mathbb U(d)$.
Suppose
\[
\Pr_U [~ \| \rho_U - \rho_I \|_1 \le \epsilon ~] \le \zeta
\]
for Haar random $U \in G$.
We wish to find a set $\{ U_i \}$ of unitaries with cardinality at least $\lceil 1/\zeta \rceil$
such that $\| \rho_{U_i} - \rho_{U_j} \|_1 > \epsilon$ whenever $i \neq j$.
This can be done inductively starting with the singleton $\{ I \}$.
Since Haar measure is left-invariant, $\Pr_U [ ~\| \rho_U - \rho_{V} \|_1 \le \epsilon ~] \le \zeta$
for any unitary $V \in G$.
If $m < \lceil 1/\zeta \rceil$ unitaries are chosen, the probability of choosing a unitary $U$ such that
$\rho_{U}$ is $\epsilon$-close to any previously chosen $\rho_{U_i}$
is at most $\zeta m$, which is strictly smaller than $1$.
This proves the existence of one more desired unitary,
and we obtain a set of $\lceil 1/\zeta \rceil$ elements.
The probability $\zeta$ will be repeatedly estimated using the following fact.
\begin{lem}[Lemma III.5 of Ref.~\cite{HaydenLeungWinter2006}]
Let $P$ and $Q$ be projectors on $\mathbb C^{d}$ of rank $p$ and $q$, respectively.
Let $U \in \mathbb U(d)$ be Haar random. It holds that
\begin{align*}
\forall z > 0: \Pr_U \left[ \frac{d}{pq} \tr QUPU^\dagger  \ge 1 + z \right] &\le \exp[ - pq f(z) ],\\
\forall z \in (0,1):\Pr_U \left[ \frac{d}{pq} \tr QUPU^\dagger  \le 1 - z \right] &\le \exp[ - pq f(-z) ],
\end{align*}
where
\[
f(z) = z - \ln(1+z) \ge
\begin{cases}
 (1+z) /2 & z \in [5, \infty) \\
(1-\ln 2)\ z^2 & z \in (-1,1] \\
 z^2 /2 & z \in (-1,0]
\end{cases}.
\]
\label{lem:projector-overlap}
\end{lem}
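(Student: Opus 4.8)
The plan is to read the statement as a concentration result for the normalized overlap $X := \tr(Q U P U^\dagger)$ and to prove it by a Laplace-transform (Chernoff) argument. First I would use invariance of the Haar measure under left and right multiplication $U \mapsto V^\dagger U W^\dagger$ to rotate $P$ and $Q$ into coordinate projectors; since $\tr(QUPU^\dagger)=\tr(P U^\dagger Q U)$ the roles of $P,Q$ are symmetric, so WLOG $p\le q$ and the distribution of $X$ depends only on $p,q,d$ and equals $\sum_{i\le q,\,j\le p}|U_{ij}|^2$, the squared Frobenius norm of the top-left $q\times p$ block of $U$. Because $\E_U[UPU^\dagger]=(p/d)I$, we get $\E[X]=pq/d$, so $\tfrac{d}{pq}X$ has mean $1$, which is exactly the quantity being bounded.

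The shape of the bound dictates the strategy: $f(z)=z-\ln(1+z)$ is precisely the Cram\'er rate function of a unit-mean exponential random variable, and the exponent carries a factor $pq$. This tells me to compare $\tfrac{d}{pq}X$ with the empirical average of $pq$ i.i.d.\ unit-mean exponentials (equivalently $dX$ with a $\mathrm{Gamma}(pq,1)$ variable). The heart of the proof is therefore the moment-generating-function bound
\[
\E_U\bigl[\exp(t\, d\, X)\bigr] \le (1-t)^{-pq}\qquad(t<1),
\]
the right side being the MGF of $\mathrm{Gamma}(pq,1)$. I would establish this by realizing the random rank-$p$ projector $UPU^\dagger$ through a $d\times p$ complex Gaussian matrix (Gram--Schmidt / Jacobi-ensemble representation): in the ``free'' Gaussian model the block entries behave as independent $\mathcal{CN}(0,1/d)$ variables with $d|U_{ij}|^2\sim\mathrm{Exp}(1)$, so the MGF factorizes into $pq$ copies of $(1-t)^{-1}$. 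The orthonormality constraint among the columns of $U$ only reduces the fluctuations relative to this independent model, so the Gaussian MGF dominates the true one in the direction needed.

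With the MGF bound in hand the tails follow by optimizing the Chernoff exponent. For the upper tail, $\Pr[\tfrac{d}{pq}X\ge 1+z]\le\inf_{0<t<1}(1-t)^{-pq}e^{-tpq(1+z)}$, and the optimum $t=z/(1+z)$ yields exponent $-pq\bigl[(1+z)-1-\ln(1+z)\bigr]=-pq\,f(z)$; the analogous optimization with $t<0$ gives $-pq\,f(-z)$ for the lower tail. It then remains only to verify the three piecewise lower bounds on $f$, which are elementary: $f$ is convex with $f(0)=0$ and $f'(z)=z/(1+z)$, so writing $z=-w$ the expansion $f(-w)=w^2/2+w^3/3+\cdots$ gives $f\ge z^2/2$ on $(-1,0]$; on $(-1,1]$ one checks the sharp bound $f(z)\ge(1-\ln 2)z^2$ (equality at $z=1$); and on $[5,\infty)$ one checks $z/2-\ln(1+z)-1/2\ge 0$ by monotonicity, its value at $z=5$ being positive.

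The main obstacle is the middle step. The block of a Haar-random unitary is not a matrix of independent Gaussians, so $(1-t)^{-pq}$ is not an equality but a domination statement, and one must show that the unitarity constraint shifts the Laplace transform in the favorable direction for \emph{both} the upper and the lower tail at once. The variance comparison is encouraging---the exact variance of $dX$ is at most $pq$, the Gaussian value, with the deficit coming from the negative correlations among the entries---but converting this heuristic into a clean MGF inequality valid for all $t<1$ is where the real work lies; everything before and after it is reduction and bookkeeping.
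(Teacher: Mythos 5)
Your reduction, your identification of $f(z)=z-\ln(1+z)$ as the Gamma/exponential rate function, the Chernoff optimization at $t=z/(1+z)$ (resp.\ the negative-$t$ optimization for the lower tail), and the calculus for the piecewise bounds on $f$ are all correct, and they match the outer structure of the paper's argument. But the proof has a genuine gap, and it is exactly the step you flag yourself: the moment-generating-function domination
\begin{equation*}
\E_U\bigl[\exp(t\,d\,\tr(QUPU^\dagger))\bigr]\;\le\;(1-t)^{-pq}\qquad(t<1,\ t\ \text{of either sign})
\end{equation*}
is never proved. Your justification --- that the block of a Haar unitary looks like i.i.d.\ $\mathcal{CN}(0,1/d)$ entries and that ``the orthonormality constraint only reduces the fluctuations, so the Gaussian MGF dominates in the direction needed'' --- is a heuristic, not an argument. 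A variance comparison (or any statement about ``reduced fluctuations'') does not imply domination of the Laplace transform, and certainly not simultaneously for $t>0$ and $t<0$, which is what the two tails require; two-sided MGF domination is a far stronger, convex-ordering-type statement. Since everything else in your write-up is routine once this inequality is granted, the missing step is the entire mathematical content of the lemma.

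The paper closes precisely this gap with a trick worth knowing. Take $2dp$ real Gaussians (variance $\tfrac12$) and view them as a vector in $\bbC^d\ot\bbC^p$; splitting into radius and direction, one gets
\begin{equation*}
\sum_{i=1}^{2pq} x_i^2 \;=\; r^2\,\tr(Q\rho),
\end{equation*}
where $\rho$ is the reduced density matrix on $\bbC^d$ of the random pure direction state, a unitarily invariant random matrix of rank $\le p$. A preliminary lemma (Lemma~\ref{lem:rv-decompose}) lets one write $\rho=U\eta U^\dagger$ with $U$ Haar and $\eta$ an exchangeable random spectrum \emph{independent} of $U$, with $\E[\eta]=P/p$ and $\E[r^2]=dp$. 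Conditioning on $U$ and applying Jensen's inequality (convexity of $\exp$) to the inner expectations over $r$ and $\eta$ yields
\begin{equation*}
\E_{x}\exp\Bigl[\xi\sum_{i=1}^{2pq}x_i^2\Bigr]\;\ge\;\E_U\exp\bigl[\xi\,d\,\tr(QUPU^\dagger)\bigr]
\quad\text{for every real }\xi,
\end{equation*}
and the left side evaluates to $(1-\xi)^{-pq}$ (for $\xi<1$). This is exactly your desired inequality, valid for both signs of $\xi$ at once, and it never needs any claim about how unitarity constrains correlations --- the convexity argument sidesteps that issue entirely. If you want to salvage your Gram--Schmidt/Jacobi-ensemble route instead, you would have to control hypergeometric functions of matrix argument, which is substantially harder than the Jensen device above.
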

Ref.~\cite{HaydenLeungWinter2006} does not explicitly cover the
$z > 1$ case for the first inequality,
though it implicitly covered in their proof. See Appendix~\ref{sec:randomProjectorOverlap}.

\subsection{Joint Measurement}
This section constitutes the proof of Lemma~\ref{lem:eps-net}.

\subsubsection{Packing net I}
Suppose $3r < d$. Let
\begin{align}
U = \begin{pmatrix}
I_r & 0 & 0 \\
0 & A_{r \times r} & B_{r \times (d-2r)}\\
0 & C_{(d-2r) \times r} & D_{(d-2r) \times (d-2r)}
\end{pmatrix}
\label{eq:U-components}
\end{align}
be a unitary matrix of $\mathbb U(d-r)$ with blocks as indicated,
embedded into $\mathbb U(d)$.
For $0 \le t \le 1$, define
\begin{align}
\rho_{t,I} &=
\begin{pmatrix}
(1- t^2) I_r /r& t \sqrt{1-t^2} I_r /r& 0 \\
t \sqrt{1-t^2} I_r/r & t^2 I_r /r& 0 \\
0 & 0 & 0_{d-2r}
\end{pmatrix}, \\
\rho_{t,U} &=
U \rho_{t,I} U^\dagger . \nonumber
\end{align}
It is a maximally mixed state on an $r$-dimensional subspace.
We claim that the distance between $\rho_{t,U}$ satisfies
\begin{align}
\| \rho_{t,U} - \rho_{t, I_{d-r} } \|_1 \ge \frac{t\sqrt{1-t^2}}{r} \tr C^\dagger C
\label{eq:dist-U}
\end{align}
where $C$ is as in Eq.~\eqref{eq:U-components}.
To prove this, observe that
$\| \rho_{t,U} - \rho_{t, I_{d-r} } \|_1 \ge |\tr[ (\rho_{t,U} - \rho_{t, I_{d-r} }) V]|$
where
\[
V = \begin{pmatrix}
A & 0 & BF \\
0 & E & 0 \\
C & 0 & DF
\end{pmatrix}
\]
and $E \in \mathbb U(r)$ and $F \in \mathbb U(d-2r)$ are arbitrary.
Abbreviate as $\alpha = (1-t^2)/r$, $\beta = t\sqrt{1-t^2}/r$, and $\gamma=t^2 /r$.
Expanding the formula,
\begin{align*}
&\tr[ (\rho_{t,U} - \rho_{t, I_{d-r} }) V] \\
&=
\tr\left[ \begin{pmatrix}
0 & \beta(A^\dagger - I) & \beta C^\dagger \\
\beta(A-I) & \gamma(AA^\dagger - I) & \gamma AC^\dagger \\
\beta C & \gamma CA^\dagger & \gamma CC^\dagger
\end{pmatrix}
\begin{pmatrix}
A & 0 & BF \\
0 & E & 0 \\
C & 0 & DF
\end{pmatrix} \right]\\
&=
\tr
\begin{pmatrix}
\beta C^\dagger C & \star & \star \\
\star & (\gamma AA^\dagger -I)E & \star \\
\star & \star & (\beta CB+\gamma CC^\dagger D)F
\end{pmatrix}.
\end{align*}
For some unitary $E$ and $F$, the trace of the last two entries
become the trace norm of the matrices in the parentheses,
which are non-negative.
This proves Eq.~\eqref{eq:dist-U}.
\begin{lem}
If $0 < t < 1/2$ and $r < d/3$,
there exists a finite subset $\{ U_i \} \subset \mathbb U(d-r)$ of cardinality $N \ge \exp(dr/54)$
such that $\| \rho_{t,U_i} - \rho_{t,U_j} \|_1 > t/4$ for any $i \neq j$.
The Holevo $\chi_0$ of $\{ \rho_{t,U_i} \}_{i=1}^N$ fulfills
$
\chi_0 \le t^2  \ln \frac{ed}{t^2 r} .
$
\end{lem}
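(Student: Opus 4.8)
The plan is to prove the two assertions — the existence of the large packing net and the bound on $\chi_0$ — separately, reusing the machinery already assembled earlier in the section. For the packing net, I would invoke the probabilistic existence argument: it suffices to show that for Haar-random $U \in \mathbb U(d-r)$,
\[
\zeta := \Pr_U\!\left[\, \|\rho_{t,U} - \rho_{t,I}\|_1 \le t/4 \,\right] \le \exp(-dr/54),
\]
since that argument then produces a set of $\lceil 1/\zeta\rceil \ge \exp(dr/54)$ unitaries with pairwise trace distance $> t/4$. By the distance bound \eqref{eq:dist-U}, the event $\|\rho_{t,U}-\rho_{t,I}\|_1 \le t/4$ forces $\tr C^\dagger C \le r/(4\sqrt{1-t^2})$, so it is enough to bound the probability of the latter. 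The key observation is that $\tr C^\dagger C = \tr(QUPU^\dagger)$, where $P$ is the rank-$r$ projector onto the column space of $C$ (block~1) and $Q$ the rank-$(d-2r)$ projector onto the row space of $C$ (block~2), both inside $\mathbb C^{d-r}$; indeed $\tr(QUPU^\dagger) = \|QUP\|_F^2 = \|C\|_F^2$. I would then apply the lower-tail estimate of \lemref{projector-overlap} with $p=r$, $q=d-2r$, and ambient dimension $d-r$, writing the threshold as $r/(4\sqrt{1-t^2}) = (1-z)\,pq/(d-r)$, i.e. $1-z = (d-r)/\bigl(4\sqrt{1-t^2}\,(d-2r)\bigr)$.

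The crux of this half is to verify that $z$ is bounded away from $0$ by a universal constant and that $pq\,f(-z)\ge dr/54$. The hypotheses $t<1/2$ (so $\sqrt{1-t^2}>\sqrt3/2$) and $r<d/3$ (so $d-2r>d/3$ and $(d-r)/(d-2r)<2$) give $1-z < 1/\sqrt3$, hence $z > 1 - 1/\sqrt3 \approx 0.42$, while $pq = r(d-2r) > dr/3$. Feeding these into $\Pr \le \exp(-pq\,f(-z))$ then yields $\zeta \le \exp(-dr/54)$. I expect this constant-tracking to be the main obstacle: the generic quadratic estimate $f(-z)\ge(1-\ln2)z^2$ is nearly tight at the worst-case corner $t\to 1/2$, $r\to d/3$, so to land exactly the stated constant one should either evaluate $f$ directly at the relevant argument or argue slightly more carefully rather than use the quadratic bound verbatim.

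For the Holevo term I would compute the average state explicitly. Writing $\rho_{t,I} = \tfrac1r\sum_{i=1}^r \proj{\psi_i}$ with $\ket{\psi_i} = \sqrt{1-t^2}\,\ket{e_i} + t\,\ket{f_i}$, where $\ket{e_i}$ span block~0 (fixed by $U$) and $\ket{f_i}$ span block~1 (rotated by $U$), the Haar average annihilates the cross terms because $\mathbb E_U[U]=0$ and replaces $U\proj{f_i}U^\dagger$ by the maximally mixed state $\tfrac{1}{d-r}I_{d-r}$ on the rotated subspace. This gives
\[
\mathbb E_U\bigl[U\rho_{t,I}U^\dagger\bigr] = \frac{1-t^2}{r}\,\Pi_0 + \frac{t^2}{d-r}\,I_{d-r},
\]
where $\Pi_0$ projects onto block~0; its spectrum is $\tfrac{1-t^2}{r}$ with multiplicity $r$ and $\tfrac{t^2}{d-r}$ with multiplicity $d-r$. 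Since $S(\rho_{t,I})=\ln r$, a direct computation yields
\[
\chi_0 = -(1-t^2)\ln(1-t^2) - t^2\ln t^2 + t^2\ln\tfrac{d-r}{r}.
\]

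Finally I would close the bound using the elementary inequality $-(1-x)\ln(1-x)\le x$ at $x=t^2$ together with $\ln\tfrac{d-r}{r}\le\ln\tfrac{d}{r}$, which collapses the right-hand side to $t^2\bigl(1-\ln t^2+\ln\tfrac{d}{r}\bigr)=t^2\ln\tfrac{ed}{t^2 r}$, exactly as claimed. This half is entirely routine once the averaged state is in hand, so the analysis of the packing-net probability (and its constant) is where the real work lies.
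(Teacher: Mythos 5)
Your proposal is correct and takes essentially the same route as the paper: reduce the distance event via Eq.~\eqref{eq:dist-U} to a lower-tail bound on $\tr C^\dagger C = \tr(QUPU^\dagger)$ from Lemma~\ref{lem:projector-overlap} (with $p=r$, $q=d-2r$ in ambient dimension $d-r$), feed it to the probabilistic existence argument, and bound $\chi_0$ by the entropy of the Haar-averaged state $\frac{1-t^2}{r}\Pi_0+\frac{t^2}{d-r}\Pi_0^\perp$ exactly as you computed. Your worry about constant-tracking is unfounded: the paper simply notes $\frac{d-r}{4\sqrt{1-t^2}\,(d-2r)} < \frac{1}{\sqrt 3} < 1-\frac13$, takes $z=\frac13$, and uses the lower-tail estimate $f(-z)\ge z^2/2$ verbatim to get $\exp(-r(d-2r)/18) < \exp(-rd/54)$ (your larger $z > 1-1/\sqrt3$ with the same $z^2/2$ bound works comfortably too; the $(1-\ln 2)z^2$ estimate is only needed for positive arguments).
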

\begin{proof}
Lemma~\ref{lem:projector-overlap} states that
if $U$ is a Haar random unitary matrix of dimension $k$,
then any $k_1 \times k_2$ subblock $K$ of $U$ satisfies
\[
\Pr\left[ \frac{k}{k_1 k_2} \tr( K^\dagger K)  < 1-z \right]
\le \exp ( -k_1 k_2 z^2 / 2 )
\]
for $z \in (0,1)$.
Eq.~\eqref{eq:dist-U} says that $\| \rho_{t,I_{d-r}} - \rho_{t,U} \|_1 \le t / 4 $ implies
$\frac{d-r}{r(d-2r)} \tr C^\dagger C  \le \frac{1}{\sqrt{3}} < 1 - \frac{1}{3}$.
Therefore,
\[
\Pr[ \| \rho_{t,I_{d-r}} - \rho_{t,U} \|_1 \le t/4 ] \le e^{-r(d-2r) / 18 } < e^{-rd/54} ,
\]
and we resort to the probabilistic existence argument.

Next, we estimate the Holevo information $\chi$.
Since $U$ is unitary, we have $S(\rho_{t,U}) = S(\rho_{t,I_{d-r}}) = \ln r$.
By the concavity of entropy, the ensemble average may be replaced with
$\bar \rho_t = \int \rd U \rho_{t,U}$,
only to increase the entropy.
By Schur's lemma, the matrix $\bar \rho_t$ is diagonal, and has entropy
\begin{align*}
S(\bar \rho_t) = H(t^2) + (1-t^2) \ln r + t^2 \ln(d-r),
\end{align*}
where $H(t^2) = -t^2 \ln( t^2 ) - (1-t^2) \ln(1-t^2)$ is the binary entropy.
Combining, we have $\chi / n \le H(t^2) + t^2 \ln \frac{d-r}{r} $.
Using $H(z) \le z \ln (e/z)$, we finish the proof.
\end{proof}

\subsubsection{Packing nets II \& III}
Assume that $d$ is an even number,
and fix a projector $Q = \mathrm{diag}(1,\ldots,1,0,\ldots,0)$ of rank $r \le d/2$.
For any $d \times d$ unitary $U$ and $0\le t \le 1$, define
\begin{align}
\tau_{t,U} = \frac{1+t}{2r}UQU^\dagger + \frac{1-t}{2(d-r)}(I_d-UQU^\dagger) .
\label{eq:tau}
\end{align}

Given an ensemble $\{ \tau_{t,U} \}$,
the entropy of the ensemble average is certainly at most $\ln d$.
The entropy of $\tau_{t,U}$ is equal to $H( (1+t)/2 ) + \frac{1+t}{2}\ln r + \frac{1-t}{2} \ln(d-r)$,
where $H( \cdot )$ is the binary entropy.
Therefore, the Holeve $\chi_0$ is bounded as
\begin{align}
 \chi_0 \le \frac12 \ln \frac{d^2}{r(d-r)} + \frac{t}{2} \ln \frac{d-r}{r} - H \left( \frac{1+t}{2}\right) .
 \label{eq:chi-tau}
\end{align}
Next, if $A$ denotes the upper-left $r \times r$
and $C$ the lower-left $(d-r) \times r$ submatrix of $U$,
we have
\begin{align}
\tr A A^\dagger + \tr C C^\dagger &= r \nonumber \\
\tr B B^\dagger + \tr D D^\dagger &= d-r \label{eq:trCompU}\\
\tr C C^\dagger + \tr D D^\dagger &= d-r \nonumber
\end{align}
and
\begin{align*}
 &\tau_{t,U} - \tau_{t,I_d} =\\
&\begin{pmatrix}
 \alpha AA^\dagger + \beta BB^\dagger - \alpha I_r & \star \\
 \star & \alpha CC^\dagger + \beta DD^\dagger - \beta I_{d-r}
\end{pmatrix}
\end{align*}
where $\alpha = (1+t)/2r$ and $\beta = (1-t)/2(d-r)$.
Multiplying a unitary $\mathrm{diag}( -I_r , I_{d-r})$ on the right of $\tau_{t,U} - \tau_{t,I_d}$,
we see that
\begin{align}
&\| \tau_{t,U} - \tau_{t,I_d} \|_1 \nonumber \\
&\ge
\alpha \tr( CC^\dagger - AA^\dagger) + \beta(DD^\dagger - BB^\dagger) +(d-r)\beta - r \alpha \nonumber \\
&= 2(\alpha - \beta) \tr(CC^\dagger) \qquad \text{by Eq.~\eqref{eq:trCompU}} \nonumber \\
&=
\left(\frac{1+t}{r}- \frac{1-t}{d-r}\right) \tr C C^\dagger . \label{eq:dU-tau}
\end{align}
\begin{lem}\label{lem:packingII}
Suppose $r = d/2$. Then,
there exists a finite subset $\{ U_i\} \subset \mathbb U(d)$ of cardinality $N \ge \exp(d^2/32)$
such that $\| \tau_{t,U_i} - \tau_{t,U_j} \|_1 > t/2$ for any $i \neq j$.
The Holevo $\chi_0$ fulfills $\chi_0 \le t^2$.
\end{lem}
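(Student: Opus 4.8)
The plan is to establish the two assertions separately: first the packing statement (the lower bound $N \ge \exp(d^2/32)$ together with the claimed $t/2$ separation), and then the Holevo bound $\chi_0 \le t^2$, both obtained by specializing the general formulas \eqref{eq:dU-tau} and \eqref{eq:chi-tau} to $r = d/2$.

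For the packing, I would specialize the distance estimate. Setting $r = d/2$ (so that $d-r = r$) in Eq.~\eqref{eq:dU-tau} collapses the prefactor $\frac{1+t}{r} - \frac{1-t}{d-r}$ to $\frac{2t}{r}$, giving $\|\tau_{t,U} - \tau_{t,I_d}\|_1 \ge \frac{2t}{r}\tr CC^\dagger$, where $C$ is the lower-left $r \times r$ block of $U$. Hence $\|\tau_{t,U} - \tau_{t,I_d}\|_1 \le t/2$ forces $\tr C^\dagger C \le r/4$, i.e.\ $\frac{d}{r^2}\tr C^\dagger C \le \frac12$. I would then invoke Lemma~\ref{lem:projector-overlap} in the subblock form used for Packing net~I, with ambient dimension $k = d$, block sizes $k_1 = k_2 = r$, and deviation $z = 1/2$, yielding $\Pr_U[\frac{d}{r^2}\tr C^\dagger C \le \frac12] \le \exp(-r^2 z^2/2) = \exp(-r^2/8) = \exp(-d^2/32)$. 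Feeding $\zeta = \exp(-d^2/32)$ into the probabilistic existence argument then produces a set of $N \ge \lceil 1/\zeta\rceil \ge \exp(d^2/32)$ unitaries whose states are pairwise more than $t/2$ apart, as claimed.

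For the Holevo information, I would substitute $r = d/2$ into Eq.~\eqref{eq:chi-tau}. Then $r(d-r) = d^2/4$ and $\frac{d-r}{r} = 1$, so the first term becomes $\frac12\ln 4 = \ln 2$, the middle term vanishes, and one is left with $\chi_0 \le \ln 2 - H\!\left(\frac{1+t}{2}\right)$. It then suffices to show $\ln 2 - H\!\left(\frac{1+t}{2}\right) \le t^2$ for $t \in [0,1]$. I would recognize the left-hand side as the relative entropy $D\!\left(\frac{1+t}{2}\,\big\|\,\frac12\right)$ and use its power series $D\!\left(\frac{1+t}{2}\,\big\|\,\frac12\right) = \sum_{j\ge1}\frac{t^{2j}}{(2j-1)(2j)}$, obtained by integrating $\frac{d}{dt}H\!\left(\frac{1+t}{2}\right) = -\mathrm{arctanh}(t)$. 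Since $t^{2j} \le t^2$ for $t \in [0,1]$ and $j \ge 1$, and $\sum_{j\ge1}\frac{1}{(2j-1)(2j)} = \ln 2$, the series is bounded by $t^2\ln 2 \le t^2$, giving $\chi_0 \le t^2$.

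The main obstacle is precisely this scalar inequality $\ln 2 - H\!\left(\frac{1+t}{2}\right) \le t^2$. A direct second-derivative argument on $t^2 - \ln 2 + H\!\left(\frac{1+t}{2}\right)$ fails because its second derivative $2 - 1/(1-t^2)$ changes sign at $t = 1/\sqrt2$, so the function is not globally convex and one must argue separately near $t = 1$. The power-series representation sidesteps this entirely and furnishes a uniform factor $\ln 2 < 1$ of slack on all of $[0,1]$; identifying that series, equivalently recording $\frac{d}{dt}H\!\left(\frac{1+t}{2}\right) = -\mathrm{arctanh}(t)$ and $\sum_{j\ge1}\frac{1}{(2j-1)(2j)} = \ln 2$, is the one step that requires care.
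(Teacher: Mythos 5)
Your proposal is correct and follows the paper's proof essentially verbatim: both specialize Eq.~\eqref{eq:dU-tau} and Eq.~\eqref{eq:chi-tau} to $r=d/2$, apply Lemma~\ref{lem:projector-overlap} with deviation $z=1/2$ to get the failure probability $\exp(-r^2/8)=\exp(-d^2/32)$, and invoke the probabilistic existence argument. The only difference is that you supply an explicit proof of the scalar inequality $\ln 2 - H\!\left(\frac{1+t}{2}\right) \le t^2$ via the series $\sum_{j\ge 1}\frac{t^{2j}}{(2j-1)(2j)} \le t^2\ln 2$, a step the paper asserts without justification; your series argument is correct and is a clean way to close that gap.
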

\begin{proof}
Eq.~\eqref{eq:chi-tau} becomes $\chi /n \le \ln 2 - H( (1+t)/2 ) \le t^2$.
Eq.~\eqref{eq:dU-tau} says that if $\| \tau_{t,U} - \tau_{t,I} \|_1 \le t/2$,
then $(4/d) \tr CC^\dagger \le 1/2$.
Lemma~\ref{lem:projector-overlap} states that
this happens with probability at most $\exp( - d^2 / 32 )$.
The probabilistic existence argument applies.
\end{proof}

\begin{lem}\label{lem:packingIII}
Set $t=1$.
Suppose $\epsilon \in (0,1)$, and $r < d(1-\epsilon)/6$. Then,
there exists a finite subset $\{ U_i\} \subset \mathbb U(d)$ of cardinality $N \ge \exp( (1-\epsilon)r d / 2 )$
such that $\| \tau_{1,U_i} - \tau_{1,U_j} \|_1 > 2\epsilon$ for any $i \neq j$.
The Holevo $\chi_0$ fulfills $\chi_0 \le \ln (d/r)$.
\end{lem}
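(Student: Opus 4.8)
The plan is to handle the two assertions separately: the Holevo bound follows by a one-line substitution, while the packing bound follows from the probabilistic existence argument fed by a large-deviation estimate on the overlap of two copies of the subspace. The crucial idea is to phrase the ``bad event'' in terms of the \emph{same}-subspace overlap $\tr[QUQU^\dagger]$ and to exploit the hypothesis $r<d(1-\epsilon)/6$ to land in the linear large-deviation regime of Lemma~\ref{lem:projector-overlap}.

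For the Holevo information I would set $t=1$ in Eq.~\eqref{eq:chi-tau}. Since the binary entropy satisfies $H((1+t)/2)=H(1)=0$, the bound collapses to $\chi_0 \le \frac12 \ln\frac{d^2}{r(d-r)} + \frac12\ln\frac{d-r}{r} = \frac12\ln\frac{d^2}{r^2} = \ln\frac dr$, which is exactly the claim. Nothing further is needed here.

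For the packing net, note that $\tau_{1,U}=\frac1r UQU^\dagger = U\tau_{1,I}U^\dagger$, so the construction fits the probabilistic existence argument with $\rho_I=\tau_{1,I}$ and threshold $2\epsilon$. By Eq.~\eqref{eq:dU-tau} at $t=1$ we have $\|\tau_{1,U}-\tau_{1,I}\|_1 \ge \frac 2r \tr CC^\dagger$, so the event $\|\tau_{1,U}-\tau_{1,I}\|_1 \le 2\epsilon$ forces $\tr CC^\dagger \le \epsilon r$. The key reformulation is to pass to the complementary quantity: since $\tr AA^\dagger + \tr CC^\dagger = r$ by Eq.~\eqref{eq:trCompU} and $\tr AA^\dagger = \tr[QUQU^\dagger]$ (the top-left $r\times r$ block of $UQU^\dagger$ is $AA^\dagger$), the bad event implies $\tr[QUQU^\dagger] \ge (1-\epsilon)r$. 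I would then apply the upper-tail inequality of Lemma~\ref{lem:projector-overlap} with both projectors equal to the rank-$r$ projector $Q$ (so $p=q=r$), choosing the threshold $1+z = (1-\epsilon)d/r$, equivalently $\tr[QUQU^\dagger]\ge (1-\epsilon)r$.

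The step I expect to be the crux is the choice of regime. The hypothesis $r<d(1-\epsilon)/6$ gives $z = (1-\epsilon)d/r - 1 > 5$, so we are in the range $z\in[5,\infty)$ where Lemma~\ref{lem:projector-overlap} supplies the \emph{linear} bound $f(z)\ge (1+z)/2$ rather than the weak quadratic one. Hence $\Pr[\tr[QUQU^\dagger]\ge(1-\epsilon)r] \le \exp(-r^2 f(z)) \le \exp(-r^2\cdot \tfrac{(1-\epsilon)d}{2r}) = \exp(-(1-\epsilon)rd/2)$, so the bad probability $\zeta$ is at most $\exp(-(1-\epsilon)rd/2)$, and the existence argument yields a net of cardinality $N\ge\lceil 1/\zeta\rceil \ge \exp((1-\epsilon)rd/2)$. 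It is worth flagging that the naive route — applying the \emph{lower}-tail bound directly to $\tr CC^\dagger \le \epsilon r$ — only gives the Gaussian estimate $f(-z)\ge z^2/2$ at a small deviation $z = 1-\epsilon d/(d-r)$, which is too lossy to reach the exponent $(1-\epsilon)rd/2$; switching to the upper tail of $\tr AA^\dagger$, where the condition $r<d(1-\epsilon)/6$ precisely buys the linear tail, is what makes the stated constant $1/2$ come out.
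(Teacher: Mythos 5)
Your proposal is correct and follows essentially the same route as the paper's proof: both substitute $t=1$ into Eq.~\eqref{eq:chi-tau} to get $\chi_0 \le \ln(d/r)$, and both convert the bad event via Eq.~\eqref{eq:dU-tau} and Eq.~\eqref{eq:trCompU} into the upper-tail event $\frac{d}{r^2}\tr AA^\dagger = \frac{d}{r^2}\tr[QUQU^\dagger] \ge (1-\epsilon)d/r > 6$, apply the linear regime $f(z)\ge(1+z)/2$ of Lemma~\ref{lem:projector-overlap} to obtain $\zeta \le \exp(-(1-\epsilon)rd/2)$, and invoke the probabilistic existence argument. Your added observation that the lower-tail bound on $\tr CC^\dagger$ would be too weak correctly identifies why the paper phrases the event through $\tr AA^\dagger$.
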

\begin{proof}
Eq.~\eqref{eq:chi-tau} becomes $\chi_0 \le \ln(d/r)$.
Eq.~\eqref{eq:dU-tau} says that if $\| \tau_{t,U} - \tau_{t,I} \|_1 \le 2\epsilon$,
then $\frac{d}{r^2} \tr A A^\dagger \ge (1-\epsilon)d/r$,
which is greater than $6$ when $r < d(1-\epsilon)/6$.
By Lemma~\ref{lem:projector-overlap}, this happens with probability
at most $\exp( - r^2  (1-\epsilon)d/2r ) = \exp(-rd(1-\epsilon)/2) $ .
The probabilistic existence argument applies.
\end{proof}

\subsection{Independent Measurement}
\label{sec:indLB}

\begin{proof}[Proof of Theorem~\ref{thm:LBprod} continued from p.~\pageref{incompletePf}]
Since $\sqrt{1-F}$ is a metric (Bures metric)
on the space of states, if there is a set of states $\rho_i$ such that
$1-F(\rho_i, \rho_j) > \delta$ for all $i \neq j$,
then for any $\rho$ there is at most one $\rho_i$ such that $1-F(\rho_i, \rho) \le \delta/4$.
In the regime where $\delta$ is close to 1,
we can use Packing Net III analyzed in Lemma~\ref{lem:packingIII}.
Since $1-T \ge 1- \sqrt{1-F^2} \ge F^2/2$,
we obtain a packing net of cardinality
$N = \exp( \Omega( rd (1-\delta)^4 ) )$
in which every state has rank at most $r$
and every pair has infidelity at least $\delta \in (0,1)$.

In order to compute Holevo information and to account for the small $\delta$ regime,
we consider the following set of states.
Define for $t \in (0,1)$ and $U \in \mathbb U(d-1)$
\begin{align}
\omega_{t,I} &=
\begin{pmatrix}
(1-t) &  & \\
 & t I_r/r & \\
 & & 0_{d-r-1}
\end{pmatrix} \\
\omega_{t,U} &= U \omega_{t,I} U^\dagger
\label{eq:deltaNet}
\end{align}
where $U$ is embedded into $\mathbb U(d)$ similarly as in Eq.~\eqref{eq:U-components}.
$\omega_{t,U}$ has rank $r+1 < d$.
Applying the defining formula $F = \tr \sqrt{ \sqrt{\omega_{t,I}} \omega_{t,U} \sqrt{\omega_{t,I}} }$
with the observation that $\omega_{t,U}$ is a mixture of two orthogonal states,
we obtain
\begin{align}
1-F(\omega_{t,U}, \omega_{t,I} ) = t ( 1-F(\tau'_{I}, \tau'_{U} ) )
\end{align}
where $\tau'_U = U \tau'_I U^\dagger $ is the $(d-1)$-dimensional state
that is maximally mixed on an $r$-dimensional subspace.
($\tau'_U$ is equal to $\tau_{t=1,U}$ of Eq.~\eqref{eq:tau} except the size.)
Since
\[
T^2 \le 2(1-F)
\]
by Eq.~\eqref{eq:FuchsGraaf},
we can apply the probabilistic existence argument
to find a set of states of cardinality $\exp( \Omega(rd) )$
that are $\delta=\Omega(t)$-separated in infidelity.

For the full rank case where the accuracy is measured in the trace distance,
we use Packing Net II analyzed in Lemma~\ref{lem:packingII},
from which we know there are $\exp( \Omega( d^2 ) )$ states
separated by the trace distance $\Omega(t)$.

Bounds for the Holevo information are supplied by the following two lemmas.
\end{proof}
\begin{lem}
Suppose $\vec M^{(a)}$ for each $a=1,\ldots,n$ is a POVM on $\mathbb C^d$.
Consider $\tau_{t,U}$ in Eq.~\eqref{eq:tau} with $r=d/2$.
For any distribution of unitaries $\{ U_j \} \subseteq \mathbb U(d)$
there exists $W \in \mathbb U(d)$
such that the Holevo information of
\[
\left\{  \tr \left( (\tau_{t,WU_j}) ^{\otimes n} \bigotimes_{a=1}^n \vec M^{(a)} \right) \right\}
\]
is at most $n t^2 / d$.
\label{lem:indLB-tracedistance}
\end{lem}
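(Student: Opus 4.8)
The plan is to exploit the product structure of both the state and the measurement to collapse to a single copy, and then to extract the crucial factor $1/d$ from a Haar average over $W$. Because the outcome distributions are already classical, the Holevo quantity in the statement is simply the mutual information $I(J{:}Y^n)$ between the message $J$ (distributed according to the given law on $\{U_j\}$) and the outcome string $Y^n=(Y_1,\dots,Y_n)$, where $Y_a$ is the result of $\vec M^{(a)}$ on the $a$-th copy. First I would replace each $\vec M^{(a)}$ by a rank-one refinement $E^{(a)}_k = c^{(a)}_k \proj{v^{(a)}_k}$ with $\sum_k c^{(a)}_k = d$; since coarse-graining the outcome can only decrease mutual information, this only increases $I(J{:}Y^n)$ for every fixed $W$, so it suffices to prove the bound for rank-one POVMs. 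Next, since $(\tau_{t,WU_j})^{\otimes n}$ and the measurement both factorize, the $Y_a$ are conditionally independent given $J$; subadditivity of Shannon entropy then gives the clean bound $I(J{:}Y^n)\le \sum_{a=1}^n I(J{:}Y_a)$, reducing everything to a single-copy estimate.

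Second, I would linearize the single-copy mutual information. With $r=d/2$ and $R_V := 2VQV^\dagger - I$ (traceless, $R_V^2 = I$), the state in Eq.~\eqref{eq:tau} becomes $\tau_{t,V} = I/d + (t/d)R_V$, so the outcome law is $p^{(a)}(k\mid j) = q_k + \delta_{k,j}$ with $q_k := \tr E^{(a)}_k/d$ and $\delta_{k,j} := (t/d)\tr(E^{(a)}_k R_{WU_j})$. Using the Pythagorean identity $\E_j D(p^{(a)}(\cdot\mid j)\,\|\,q) = I(J{:}Y_a) + D(\bar p\,\|\,q)$, I can anchor the comparison at the \emph{fixed} reference $q$ rather than at the data-dependent mixture $\bar p$, and then $D(\cdot\|\cdot)\le\chi^2(\cdot\|\cdot)$ yields
\[
I(J{:}Y_a)\ \le\ \E_j\,\chi^2\!\big(p^{(a)}(\cdot\mid j)\,\big\|\,q\big)\ =\ \E_j\sum_k \frac{1}{q_k}\,\delta_{k,j}^2 .
\]
Anchoring at $q$ (whose value is independent of $W$ and $j$) is what keeps the final bound free of any $(1-t)^{-1}$ factor.

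Third, I would average over a Haar-random $W\in\mathbb U(d)$. Writing $\tr(E^{(a)}_k R_{WU_j}) = c^{(a)}_k \bra{w}R_{U_j}\ket{w}$ with $\ket w := W^\dagger\ket{v^{(a)}_k}$ a Haar-random unit vector, the second moment of a random pure state gives $\E_W (\tr E^{(a)}_k R_{WU_j})^2 = (c^{(a)}_k)^2\,(\tr R_{U_j}^2)/(d(d{+}1)) = (c^{(a)}_k)^2/(d{+}1)$, since $\tr R_{U_j}=0$ and $\tr R_{U_j}^2=d$. Substituting and summing over $k$ with $\sum_k c^{(a)}_k = d$ gives $\E_W I(J{:}Y_a)\le t^2/(d{+}1)$ for every prior and every $a$, hence $\E_W \sum_a I(J{:}Y_a)\le n t^2/(d{+}1)\le nt^2/d$. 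Because this is an average over $W$, some $W$ achieves $\sum_a I(J{:}Y_a)\le nt^2/d$; combined with the subadditivity bound (and the rank-one reduction, which holds for that same $W$) this proves the claim.

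The main obstacle—and the entire content of the lemma—is producing the factor $1/d$ that the naive argument misses: data processing alone only yields $I(J{:}Y_a)\le\chi_0\le t^2$ per copy (from Lemma~\ref{lem:packingII}), losing a factor of $d$. That extra factor appears only through the Haar average over $W$, via the second-moment estimate $\E_W(\bra w R_{U_j}\ket w)^2 = 1/(d{+}1)$. Consequently the two delicate points are (i) reducing to rank-one POVMs so that the summed $\chi^2$ bound does not scale with the number of outcomes, and (ii) referencing the fixed uniform-outcome distribution $q$ in the $\chi^2$ step so that no $(1-t)^{-1}$ blowup enters.
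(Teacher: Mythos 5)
Your proof is correct, and its skeleton coincides with the paper's: subadditivity of entropy to reduce to a single copy, refinement to rank-one POVM elements, and a Haar second-moment computation giving $t^2/(d+1)$ per copy, hence $nt^2/d$ overall. Where you differ is in how the Haar average is deployed. The paper first uses the freedom in $W$ together with concavity of entropy to replace the arbitrary prior on $\{U_j\}$ by the Haar-uniform ensemble $\{\tau_{t,U}\}$, and then evaluates the per-copy Holevo quantity for that single ensemble by deriving the explicit Beta density of the random variable $Z_{d,d/2}$ (Eqs.~\eqref{eq:Zmean}--\eqref{eq:Zdensity}); you instead keep the prior fixed, bound $I(J{:}Y_a)\le \E_j\,\chi^2\bigl(p(\cdot\mid j)\,\Vert\, q\bigr)$ via the compensation identity anchored at the fixed reference $q_k=\tr E_k/d$, and average that quadratic bound over Haar $W$, using $\E_W \bigl(\bra{w}R\ket{w}\bigr)^2=\bigl((\tr R)^2+\tr R^2\bigr)/\bigl(d(d+1)\bigr)=1/(d+1)$, which is legitimate because $R_{U_j}$ is traceless with $R_{U_j}^2=I$ for \emph{every} $U_j$, so the moment is prior-independent. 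The two linearizations are the same inequality in disguise: for the Haar ensemble the mixture distribution equals $q$, so the paper's step $\E[X\ln X]\le \E X^2-1$ (with $X=1-t+tZ$, $\E X=1$) is exactly $D\le\chi^2$ anchored at $q$. Your route is slightly more economical, avoiding the $\min_W\le\E_V$/concavity reduction and replacing the explicit density calculation for $Z_{d,d/2}$ with the standard symmetric-subspace moment formula; your closing remarks are also on target, in particular that the same fixed $W$ serves for coarse-graining back to the original POVMs and that anchoring at $q$ is what keeps $(1-t)^{-1}$ factors out.
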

\begin{proof}
The first term of the Holevo information $\chi$
is the Shannon entropy of the distribution
\[
\mathfrak p = \E_\tau \tr\left(\tau^{\otimes n} \bigotimes_{a=1}^n \vec M^{(a)} \right)
\]
whose marginal is equal to $\E_\tau \tr( \tau \vec M^{(a)} )$.
By the subadditivity of entropy, we have
\[
H(\mathfrak p ) \leq \sum_{a=1}^n H( \E_\tau \tr( \tau \vec M^{(a)} ) ) .
\]
It follows that
\begin{align}
\chi \{ W \tau_j W^\dagger \}
\le \sum_{a=1}^n \chi_a \{ W \tau_j W^\dagger \}
\end{align}
where the subscript $a$ means with respect to $\vec M^{(a)}$.
Minimizing the right-hand side by varying $W$, we see there exists $W$
such that
\[
 \chi\{ \rho_{W U_j} \} \leq \min_V \sum_a \chi_a\{ \tau_{V U_j} \} .
\]
The minimum on the right-hand side is at most the average over $V$ from the Haar measure.
\[
\min_V \sum_a \chi_a\{ \tau_{V U_j} \} \le \E_V \sum_a \chi_a \{ \tau_{V U_j} \} .
\]
By concavity of entropy,
\[
\E_V \sum_a \chi_a\{ \tau_{V U_j} \} \le \sum_a \chi_a \{ \tau_{t,U}: \text{ Haar uniform }U \in \mathbb U(d) \} .
\]
Hence, it suffices to prove the lemmas when the initial distribution of $U_j$ is
Haar uniform, which we assume hereafter.
In addition, it suffices to consider rank-1 POVM elements
since one can always decompose a POVM element into rank-1 projectors of some positive weight.
Let each POVM element be $M_i = w_i d \proj{a_i}$.

The outcome probability is
\begin{align*}
p_i &\equiv \tr( M_i \tau_{t,U} ) = w_i d \left(\frac{2t}{d} \tr( P_1^{(i)} U P_{d/2} U^\dagger ) + \frac{1-t}{d} \right)\\
 &=: w_i (1-t+t Z_{d,d/2}^{(i)})
\end{align*}
where $P_1^{(i)} = \proj{a_i}$.
Since $\E_U Z_{d,d/2}^{(i)} = 1$ for any $i$ (see Eq.~\eqref{eq:Zmean} below),
the Holevo information per copy is
\begin{align*}
\chi_a
&= \sum_{i=1}^m - \E[ p_i] \ln \E[p_i] + \E_U [p_i \ln p_i] \\
&= \sum_{i=1}^m w_i \E_U\left[ (1-t+tZ_{d,d/2}^{(i)}) \ln (1-t+tZ^{(i)}_{d,d/2}) \right] \\
&\leq \sum_{i=1}^m w_i t^2 \left( \E_U \left[ (Z_{d,d/2}^{(i)})^2 \right] - 1 \right).
\end{align*}
Since $\E_U (Z^{(i)}_{d,d/2})^2 = (1+2/d)/(1+1/d)$ for any $i$ (see Eq.~\eqref{eq:Zvar} below),
we have
\begin{align}
\chi(\tau_{t,U}) \le \sum_{a=1}^n \chi_a \le n t^2/(d+1) .
\end{align}
This completes the proof of Lemma~\ref{lem:indLB-tracedistance}.
\end{proof}

\begin{proof}[Random variable $Z$]
Define the random variable $Z_{n,m}$ to be
\begin{equation}
Z_{n,m} := \frac{n}{m} \frac{x_1^2 + \cdots + x_{2m}^2}{x_1^2 + \cdots + x_{2n}^2}
\label{eq:Zdef}
\end{equation}
where $x_i$ are independent identical Gaussians with mean 0 and variance $1/2$.
Here we show
\begin{align}
\E Z_{n,m} &= 1, \label{eq:Zmean}\\
\E Z_{n,m}^2 &= \frac{1+ 1/m}{1+1/n}, \label{eq:Zvar}
\end{align}
by deriving the probability density function $p(Z_{n,m}=z)$ on $[0,n/m]$
\begin{align}
p(z) = \frac{m \Gamma(n)}{n \Gamma(n-m)\Gamma(m)}\left(\frac{mz}{n}\right)^{m-1}\left(1-\frac{mz}{n}\right)^{n-m-1}. \label{eq:Zdensity}
\end{align}
To this end, let $x = (x_1,\ldots, x_{2m})$ and $y = (x_{2m+1}, \ldots, x_{2n})$
be Cartesian coordinates for $\mathbb R^{2n}$.
Let $\rd^{2m-1} \Omega_x$ and $\rd^{2n-2m-1} \Omega_y$ be the solid angle elements
of respective dimensions.
Then the volume form $\rd V = \rd^{2m} x \rd^{2n-2m} y$ is equal to
$|x|^{2m-1}|y|^{2n-2m-1} \rd |x| \rd |y| \rd \Omega_x \rd \Omega_y$.
Defining new variables $r, \theta$ by $|x| = r \cos \theta$ and $|y| = r \sin \theta$
($\theta \in [0,\pi/2]$),
we see that the $(2n-1)$-dimensional solid angle element is
\[
\left.\frac{\rd V}{\rd r}\right\vert_{r=1} = \cos^{2m-1}\theta \sin^{2n-2m-1}\theta \rd \theta \rd \Omega_x \rd \Omega_y
\]
Since our variable $Z_{n,m} = (n/m) \cos^2 \theta =: (n/m)u$
is a function of $\theta$ only,
we integrate out $\rd \Omega_x \rd \Omega_y$,
and use the relation $\rd \theta = u^{-1/2}(1-u)^{-1/2} \rd u$
to arrive at Eq.~\eqref{eq:Zdensity} after normalization using Eq.~\eqref{eq:Dirichlet}.
\end{proof}

\begin{lem}
Let $t \in (0,1/3)$ and $d \ge 3$.
Suppose $\vec M^{(a)}$ for each $a=1,\ldots,n$ is a POVM on $\mathbb C^d$.
For any distribution of unitaries $\{ U_j \} \subseteq \mathbb U(d-1)$
there exists $W \in \mathbb U(d)$
such that the Holevo information of
\[
\left\{  \tr \left( (W\omega_{t,U_j}W^\dagger) ^{\otimes n} \bigotimes_{a=1}^n \vec M^{(a)} \right) \right\}
\]
is at most $4(n t^2/r) \ln (2/t) $ where $\omega$ is as in Eq.~\eqref{eq:deltaNet}.
\end{lem}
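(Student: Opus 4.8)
The plan is to follow the same architecture as the proof of Lemma~\ref{lem:indLB-tracedistance}, adapting it to the fact that $\omega_{t,I}$ has no component proportional to $\id_d$. First I would use subadditivity of Shannon entropy to reduce the Holevo information to a sum of single-copy terms $\chi\le\sum_{a=1}^n\chi_a$, where $\chi_a$ is computed with respect to $\vec M^{(a)}$ alone, and then handle the ``there exists $W$'' clause by passing from $\min_W$ to the average over Haar-random $W\in\mathbb U(d)$. As in the previous lemma it suffices to treat rank-one POVM elements $M_i=w_i d\proj{a_i}$, which satisfy $\sum_i w_i=1$.

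The essential new point is how the Haar average over $W$ is organized. Writing $\ket\psi=W\ket1$, every state in $\{W\omega_{t,U_j}W^\dagger\}_j$ shares the common ``floor'' $(1-t)\proj\psi$, because each $U_j\in\mathbb U(d-1)$ fixes $\ket1$; only the rank-$r$ part $Q_r=WU_jP_rU_j^\dagger W^\dagger$ carries the message index $j$. I would decompose the Haar measure on $\mathbb U(d)$ as (Haar $\ket\psi$ on the sphere)$\times$(Haar $S$ on the stabilizer $\mathbb U(d-1)$ of $\ket1$), apply concavity of entropy \emph{only over the inner variable} $S$ (which replaces $Q_r$ by a Haar-random rank-$r$ projector in $\psi^\perp$, for arbitrary input $\{U_j\}$, while leaving $\ket\psi$ fixed), and keep the outer average over $\ket\psi$ \emph{outside} the Holevo functional. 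It then remains to bound $\E_\psi\,\chi_a(\psi)$, where $\chi_a(\psi)$ is the single-copy Holevo information of the ensemble with fixed floor $\ket\psi$ and $Q_r$ Haar in $\psi^\perp$. The crucial subtlety is that one must average the per-$\psi$ Holevo information, not the Holevo information of the $\psi$-averaged ensemble: the floor direction is common randomness and should be conditioned on rather than mixed in, and mixing it in would spuriously inflate the bound to $O(1)$.

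For fixed $\ket\psi$ I would set $\xi_i=|\braket{a_i}\psi|^2$ and write $\bra{a_i}Q_r\ket{a_i}=(1-\xi_i)\tfrac{r}{d-1}Z^{(i)}$ with $Z^{(i)}=Z_{d-1,r}$ of mean $1$ and variance $\tfrac{d-1-r}{rd}$ by Eqs.~\eqref{eq:Zmean}--\eqref{eq:Zvar}. The outcome probability becomes $p_i=w_i d\,(s_i+v_i Z^{(i)})$ with $s_i=(1-t)\xi_i$ and $v_i=\tfrac{t(1-\xi_i)}{d-1}$, so $Y_i:=p_i/\E p_i=(s_i+v_iZ^{(i)})/(s_i+v_i)$ has mean $1$. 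Applying $Y\ln Y\le Y^2-Y$ exactly as before gives $\E[Y_i\ln Y_i]\le\Var Y_i$ and hence $\chi_a(\psi)\le\tfrac{d-1-r}{rd}\sum_i w_i d\,\tfrac{v_i^2}{s_i+v_i}$.

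The final and main step is the average over $\ket\psi$, which produces the logarithmic factor. Since $\xi_i$ is $\mathrm{Beta}(1,d-1)$-distributed for Haar $\ket\psi$, the quantity $\E_\psi\tfrac{v_i^2}{s_i+v_i}$ is the same for every $i$, so only $\sum_i w_i=1$ is needed. I would bound $\tfrac{v_i^2}{s_i+v_i}\le\min(v_i,\,v_i^2/s_i)$ and split the integral at the crossover $\xi\sim t/(d-1)$: the region $s_i\lesssim v_i$ contributes $O\!\big(t^2/(d-1)\big)$, while the region $s_i\gtrsim v_i$ contributes $\tfrac{t^2}{(d-1)(1-t)}\int_{\sim t/d}^{1}\tfrac{(1-\xi)^d}{\xi}\,d\xi$. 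Splitting this last integral at $\xi=1/d$ and using $(1-\xi)^d\le 1$ below and $1/\xi\le d$ above yields $\ln(1/t)+O(1)$, so that $d\,\E_\psi\tfrac{v_i^2}{s_i+v_i}=O\!\big(t^2\ln(2/t)\big)$. Combined with $\tfrac{d-1-r}{rd}\le 1/r$ this gives $\chi_a\le 4(t^2/r)\ln(2/t)$ for $t<1/3$, $d\ge3$, and summing over the $n$ copies gives the claim. I expect the $\ket\psi$-average to be the crux — both recognizing that it must sit outside the Holevo functional, and extracting $\ln(1/t)$ rather than $\ln(d/t)$ by exploiting the decay of the $(1-\xi)^d$ weight; the rest parallels the preceding lemma.
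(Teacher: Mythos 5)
Your proposal is correct and takes essentially the same route as the paper's proof: subadditivity to single-copy terms, $\min_W \le \E_W$ with the floor direction conditioned on (the $\psi$- or $v$-average kept outside the Holevo functional), reduction to rank-one elements measured against the Haar-random direction, the $Z_{d,1}$ and $Z_{d-1,r}$ overlap variables with the second-moment bound $\E[Y\ln Y]\le \Var Y$ (the paper's tangent inequality for $\ln p$), and a split of the Beta-weighted integral at the crossover scale to extract $\ln(1/t)+O(1)\le 4\ln(2/t)$ --- the paper does the identical split in the variable $v=d\xi$ at $v=3$, obtaining $3\ln(1+1/t)+1$. The only deviations are bookkeeping: you keep general weighted rank-one elements $\{w_i,a_i\}$ and observe the $\psi$-average is $i$-independent, where the paper passes directly to the Haar-uniform rank-one POVM.
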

\begin{proof}
The first stage of the proof is similar to that of Lemma~\ref{lem:indLB-tracedistance};
we use the freedom $W$ and consider $W \omega_{t,VU_j} W^\dagger$
for some $W \in \mathbb U(d)$ and $V \in \mathbb U(d-1)$.
By varying $V$, we may assume that our ensemble $\mathcal M_W$ is
\[
\mathcal M_W = \{ W \omega_{t,U} W^\dagger : U \in \mathbb U(d-1) \text{ is Haar random. } \}
\]
and we will estimate the Holevo information per copy $\chi_a$ of this ensemble.

There is still remaining freedom to choose $\mathcal M_W$ using $W \in \mathbb U(d)$.
Certainly,
\[
\min_W \chi_a( \mathcal M_W ) \le \E_W \chi_a( \mathcal M_W )
\]
where the average over $W$ is with respect to Haar random $W$ and
the inequality is saturated when our POVM consists of
rank-1 projectors $\proj{v}$ from Haar uniform distribution,
which we assume hereafter.

The outcome probability density is
\begin{align*}
p(\ket{v},U) &\equiv \tr( d \proj{v} \omega_{t,U} )\\
 &= d(1-t) |v_1|^2 + \frac{t d}{r} (1-|v_1|^2)\tr(U P_r U^\dagger P_1)\\
 &= (1-t)\underbrace{Z_{d,1}}_{v} + t  \frac{d-Z_{d,1}}{d-1} \underbrace{Z'_{d-1,r}}_{z}
\end{align*}
where
$v_1$ is one component of the vector $v$,
$P_r$ and $P_1$ are $r$- and $1$-dimensional projectors, respectively,
and in the third line we used the notation in Eq.~\eqref{eq:Zdef}.  We
use the notation $Z_{d,1}, Z'_{d-1,r}$ to mean two independent random variables defined
according to \eq{Zdef} for appropriate choices of $n,m$.
Since we do not use all the degrees of freedom in $U$, we can think of
$v,z$ as our random variables (distributed according to $Z_{d,1},
Z'_{d-1,r}$ respectively), corresponding to outcome probability
\begin{align}
  p(v,z) = (1-t)v + t \frac{d-v}{d-1} z .
\end{align}
Now,
\begin{align*}
\chi_a = \E_v
\left[
- p(v,\E_z[z]) \ln p(v,\E_z[z]) + \E_z [p(v,z) \ln p(v,z)]
\right],
\end{align*}
and we use $\E_z z = 1, \E_z z^2 = (1+1/r)/(1+1/(d-1))$, and
\[
\ln p(v,z) \le \ln p(v,1) + \frac{p(v,z)-p(v,1)}{p(v,1)},
\]
to obtain
\begin{align*}
\chi_a
&\le
\frac{t^2(d-r-1)}{r d (d-1)} \cdot \E_v \frac{(d-v)^2}{td + (d - td -1) v}\\
&\le
\frac{t^2}{r} \cdot \E_v \frac{1}{t+v/3}
\end{align*}
where the last line is because $t < 1/3$ and $d \ge 3$.

The probability density function of $v=Z_{d,1}$ is given by Eq.~\eqref{eq:Zdensity}
\[
f(v) = (1 - 1/d)(1-v/d)^{d-2} < 1.
\]
Since $t > 0$,
\begin{align*}
\E_v \frac{1}{t+v/3}
&= \left( \int_0^3 + \int_3^d \right) \frac{f(v)}{t+v/3} \rd v \\
&\le \int_0^3 \frac{\rd v}{t+v/3} + \int_3^d f(v) \rd v \\
&< 3 \ln(1+1/t) + 1 \\
& < 4 \ln(2/t).
\end{align*}
This completes the proof.
\end{proof}

\section{Implementation on a quantum computer}
In this section we informally describe how our tomography strategy can be implemented
in time $n^{O(dr)}$ on a quantum computer.

Our measurement involves a POVM with a continuously infinite number of
outcomes.  However, it can be approximated with a finite POVM using
ideas from \cite{Winter:02a}.  The first step is to measure $\lambda$,
as proposed by Keyl-Werner~\cite{Keyl01}.  This can be done
efficiently using the Schur transform~\cite{BCH05a} or the quantum
Fourier transform over the symmetric group~\cite{Beals97,Har05}.

Next, we would like to find a collection of unitaries $U_1,\ldots,U_m$
such that
\[
\frac 1 m \sum_{i=1}^m M(\lambda, U_i) \approx \Pi_\lambda.
\]
This can be done by choosing $m = \tilde O(\dim \cQ_\lambda / \eps^2)$ random
unitaries, as proven in \cite{Winter:02a}, which in turn was based on
\cite{AW02}).
The resulting measurement can be implemented by the isometry
\[
V = m^{-1/2} \sum_{i=1}^m \sqrt{M(\lambda,U_i)} \otimes \ket{i}.
\]
Using the Schur transform, this reduces to performing the isometry
\[
\tilde V = C \sum_{i=1}^m \sqrt{\bq_\lambda(U_i\bar\lambda U_i^\dag)} \otimes \ket i,
\]
where $C$ is a normalizing constant.  This isometry can be implemented
using $O((\dim \cQ_\lambda)^2 m^2)$ gates~\cite{isometries}, which
is $\tilde O(n^{2dr}/\eps^2)$.

We conjecture that run-time $\poly(n,d,\ln(1/\eps))$ is possible, but
do not know how to achieve this, even in the relatively simple case of
$r=1$.

\section{Discussion}

The sample complexity of the general quantum tomography problem
is nearly resolved here.
It is confirmed up to logarithmic factors that
one only needs as many copies as the number of unknown parameters
if one can perform joint measurements.
In addition, we have shown information-theoretically
that this optimal measurement {\em cannot} be a combination
of independent measurements.
Our result raises an important question
on the performance of {\em adaptive} measurements,
where an individual copy is measured at a time,
but each measurement may utilize the history of outcomes on other copies.
Is there an asymptotic separation between the power of
adaptive and collective measurements?
Another open problem is whether our joint measurement scheme
can be implemented efficiently on a quantum computer;
we briefly remark that the implementation is possible
in a polynomial time in $n$ for a fixed $d$,
but dependence on $d$ is exponential.
There is a method to extract the eigenvectors of a small-rank density matrix
on a quantum computer efficiently~\cite{LloydMohseniRebentrost2014PCA},
but it remains challenging to convert the eigenvector
into a classical description.

An independent and concurrent work~\cite{OW-tomo}
analyzes Keyl's measurement strategy~\cite{Keyl06},
and proves that it only requires $n = O(dr/\eps^2)$ copies
to achieve $\epsilon$ accuracy in trace distance.
This improves on our corollary for trace distance by removing the logarithmic factor,
but does not imply our fidelity bound, which is incomparable to theirs.

\begin{acknowledgments}
We thank
Robin Blume-Kohout, Steve Flammia, Masahito Hayashi, Debbie Leung, and John Watrous
for discussions.
We also thank
Ryan O'Donnell and John Wright for sharing their draft of \cite{OW-tomo} with us.
JH is supported by the Pappalardo Fellowship in Physics while at MIT.
AWH was funded by NSF grants CCF-1111382 and
CCF-1452616 and ARO contract W911NF-12-1-0486.
ZJ and NY's research was supported by NSERC, NSERC DAS, CRC, and CIFAR.
XW's research was funded by ARO contract W911NF-12-1-0486
and by the NSF Waterman Award of Scott Aaronson.
Part of the research was conducted when XW was visiting Institute for Quantum Computing (IQC),
University of Waterloo and XW thanks IQC for its hospitality.
\end{acknowledgments}

\appendix
\section{Overlap of random projectors}
\label{sec:randomProjectorOverlap}

Here, we provide a self-contained proof of Lemma~\ref{lem:projector-overlap}
(Lemma III.5 of Ref.~\cite{HaydenLeungWinter2006}).
We follow the ideas of Ref.~\cite{HaydenLeungWinter2006} and \cite{HaydenLeungShorWinter2004}.

\begin{lem}
Let $\mathcal D$ be the set of all $d \times d$ normalized density matrices of rank $p$,
and $\Delta$ be the set of all probability vectors $\eta$ of length $p$.
Suppose $\mathcal D$ has a $\mathbb U(d)$-invariant probability measure $\rd \rho$.
Then, there exists a permutation-symmetric probability measure
$\rd \eta$ on $\Delta$ such that
\[
\iint \rd \eta \ \rd U ~f( U \eta U^\dagger ) = \int \rd \rho ~f( \rho )
\]
for any continuous function $f$ on $\mathcal D$
where $\rd U$ is the normalized Haar measure on $\mathbb U(d)$,
and $\eta$ in between $U$ and $U^\dagger$ denotes the diagonal matrix with entries
$(\eta_1,\ldots, \eta_p, 0,\ldots,0)$.
\label{lem:rv-decompose}
\end{lem}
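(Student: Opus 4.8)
The plan is to exploit the fact that the conjugation action $\rho \mapsto U\rho U^\dagger$ of $\mathbb U(d)$ on $\mathcal D$ has orbits consisting exactly of the density matrices sharing a given (unordered) spectrum, so that any $\mathbb U(d)$-invariant probability measure must be an average, over spectra, of the unique invariant probability measure carried by each orbit. Concretely, I would simply \emph{define} $\rd\eta$ to be the push-forward of $\rd\rho$ under the eigenvalue map $\rho \mapsto \mathrm{spec}(\rho)$, symmetrized over permutations of the $p$ coordinates so that it is permutation-invariant by construction, and then verify the claimed identity by an averaging argument built on the invariance of $\rd\rho$.

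First I would introduce the orbit-averaging operation $\bar f(\rho) := \int \rd U\, f(U\rho U^\dagger)$, which is continuous since $f$ is continuous and $\mathbb U(d)$ is compact. Because $\rd\rho$ is $\mathbb U(d)$-invariant, $\int \rd\rho\, f(U\rho U^\dagger) = \int \rd\rho\, f(\rho)$ for every fixed $U$; integrating this over $\rd U$ and applying Fubini gives $\int \rd\rho\, \bar f(\rho) = \int \rd\rho\, f(\rho)$. Moreover, the left-invariance of the Haar measure (substituting $U \mapsto UV$) shows $\bar f(V\rho V^\dagger) = \bar f(\rho)$, so $\bar f$ is constant on unitary orbits; since by the spectral theorem two density matrices are unitarily conjugate iff they share the same spectrum, $\bar f$ is a function of the spectrum alone.

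Next I would set $g(\eta) := \int \rd U\, f(U\,\mathrm{diag}(\eta)\,U^\dagger)$, where $\mathrm{diag}(\eta)$ places $\eta_1,\dots,\eta_p$ on the diagonal (padded with zeros). Absorbing a permutation matrix into $U$ shows that $g$ is symmetric under permutations of $\eta$, and writing any $\rho$ as $U\,\mathrm{diag}(\mathrm{spec}(\rho))\,U^\dagger$ shows $\bar f(\rho)=g(\mathrm{spec}(\rho))$. With $\rd\eta$ defined as the symmetrized push-forward of $\rd\rho$ under $\mathrm{spec}$, the symmetry of $g$ makes the symmetrization immaterial in the integral, and the chain
\begin{align*}
\iint \rd\eta\,\rd U\, f(U\eta U^\dagger)
&= \int \rd\eta\, g(\eta)
 = \int \rd\rho\, g(\mathrm{spec}(\rho)) \\
&= \int \rd\rho\, \bar f(\rho)
 = \int \rd\rho\, f(\rho)
\end{align*}
closes the argument. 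This proof is essentially soft: the only genuine content is the invariance-plus-Fubini step of the second paragraph, and the only technical point to check is that $\mathrm{spec}$ is measurable so that the push-forward measure is well defined. I expect no real obstacle here, since eigenvalues depend continuously on $\rho$ and sorting is continuous; the main thing to handle with a little care is simply organizing the above identifications ($\bar f$ as a function of the spectrum, and $g$ as its symmetric representative on $\Delta$) cleanly enough that the final chain of equalities is unambiguous.
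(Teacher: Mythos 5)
Your proposal is correct and takes essentially the same route as the paper's proof: both introduce the orbit average $\bar f(\rho) = \int \rd U\, f(U\rho U^\dagger)$, use unitary invariance of $\rd\rho$ together with Fubini to replace $f$ by $\bar f$, observe that $\bar f$ depends only on the spectrum, and obtain $\rd\eta$ as the push-forward of $\rd\rho$ under the continuous sorted-eigenvalue map. The only cosmetic difference is in where the symmetrization happens: the paper pushes forward to the sorted simplex $\Delta^\downarrow$ and then splits $\Delta$ into $p!$ pieces carrying measure $\rd\lambda/p!$ each, whereas you symmetrize the push-forward measure directly and note that the permutation symmetry of $g(\eta) = \int \rd U\, f(U\eta U^\dagger)$ renders the symmetrization immaterial in the integral.
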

This means that the eigenvalues and the eigenvectors
can be treated as if they were ``independent random variables.''
Strictly speaking, $\rd \eta$ and $\rd U$ are {\em not} derived from $\rho$;
we just find that they induce the measure $\rd \rho$ on $\mathcal D$
by the map $(\eta,U) \mapsto U \eta U^\dagger$.
\begin{proof}
Since {\em sorted} eigenvalues are continuous functions of the matrix,
we have a map $\lambda: \mathcal D \to \Delta^\downarrow$,
which induces a measure $\rd \lambda$ on $\Delta^\downarrow$,
the set of all sorted non-negative $p$ real numbers summing to 1.
The defining equation for the induced measure is
$\int \rd \rho~ g( \lambda(\rho) ) = \int \rd \lambda ~g(\lambda)$
for any continuous function $g$.
Here, we have identified a vector with a diagonal matrix padded with $(d-p)$ zeros.
Define
\[
\bar f(\rho) = \int \rd U f( U \rho U^\dagger )
\]
so that $\bar f(\rho) = \bar f( V \rho V^\dagger )$ for any $V \in \mathbb U(d)$.
Since $\rd \rho$ is unitary invariant,
$\int \rd \rho f(\rho) = \int \rd \rho f(U\rho U^\dagger)$.
Integrating the both sides over $U$,
$\int \rd \rho f(\rho) = \int \rd U \int \rd \rho f( U \rho U^\dagger) = \int \rd \rho \bar f( \rho )$.
(All spaces are compact, so integration order never matters.)
We can now prove an analogous version of the lemma for $\Delta^\downarrow$:
\begin{align*}
\int \rd \rho f(\rho)
&= \int \rd \rho \bar f( \rho ) = \int \rd \rho \bar f( \lambda(\rho) ) \\
&= \int \rd \lambda \bar f( \lambda ) = \iint \rd \lambda \rd U f ( U \lambda U^\dagger )
\end{align*}
In order to finish the proof, all we need is to divide $\Delta$ into $p!$ pieces,
each of which is mapped to $\Delta^\downarrow$ by permuting components
up to measure zero sets,
and assign measure to each piece by $\rd \lambda / p!$.
Thus defined $\rd \eta$ on $\Delta$ is permutation-invariant.
\end{proof}

\begin{lem}
Let $x_1,x_2,\ldots$ be independent gaussian random variables with mean $0$ and variance $\frac12$.
Let $U$ be a Haar random unitary of dimension $d$,
and $P$ and $Q$ be $d$-dimensional projectors of rank $p$ and $q$, respectively.
For any real number $\xi$, it holds that
\[
\mathbb E_{x_i} \exp\left[ \xi \sum_{i=1}^{2pq} x_i^2 \right]
\ge
\mathbb E_U \exp\left[ \xi d \tr(QUPU^\dagger) \right].
\]
\end{lem}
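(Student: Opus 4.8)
The plan is to read the right-hand side as the moment generating function (MGF) of the squared norm of a fixed block of a Haar-random isometry, and then to realize that quantity as a \emph{conditional expectation} of the Gaussian quantity on the left, so that Jensen's inequality (equivalently, convex order) delivers the one-sided domination for every real $\xi$ at once. Note first that the two sides have equal means: $\E\!\sum_{i=1}^{2pq}x_i^2=pq$ and $\E_U\,d\tr(QUPU^\dagger)=d\tr\!\big(Q\,\E_U[UPU^\dagger]\big)=d\tr(Q\tfrac{p}{d}I)=pq$ by Schur's lemma, which is exactly the consistency condition one needs for a convex-order statement.

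First I would reduce to coordinate projectors: by unitary invariance of the Haar measure the law of $\tr(QUPU^\dagger)$ depends on $P,Q$ only through their ranks, so take $P=\sum_{i=1}^p\proj{i}$ and $Q=\sum_{j=1}^q\proj{j}$. Writing $V$ for the first $p$ columns of $U$, a Haar-random $d\times p$ isometry, we have $UPU^\dagger=VV^\dagger$ and $\tr(QUPU^\dagger)=\tr(V^\dagger QV)=:X/d$. Next, realize $V$ through a $d\times p$ complex Ginibre matrix $G$ with i.i.d.\ standard complex Gaussian entries, via the polar decomposition $G=VR$ with $R=(G^\dagger G)^{1/2}$; the standard fact I will invoke is that the isometric factor $V$ is Haar distributed and \emph{independent} of the positive factor $R$. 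Setting $B=QG$ (the top $q\times p$ block of $G$), its $pq$ complex entries are i.i.d.\ standard complex Gaussians, so $Y:=\tr(B^\dagger B)$ has exactly the law of $\sum_{i=1}^{2pq}x_i^2$, and the left-hand side is $\E\,e^{\xi Y}$.

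The crux is the conditional expectation of $Y$ given $V$. Since $B=QG=QVR$ and $R$ is Hermitian,
\[
Y=\tr(B^\dagger B)=\tr\!\big(R\,(V^\dagger QV)\,R\big)=\tr\!\big((V^\dagger QV)\,G^\dagger G\big).
\]
Because $G^\dagger G=R^2$ is independent of $V$ with $\E[G^\dagger G]=dI_p$ (the mean of a $p\times p$ complex Wishart matrix with $d$ degrees of freedom), conditioning on $V$ gives
\[
\E[Y\mid V]=\tr\!\big((V^\dagger QV)\,dI_p\big)=d\,\tr(QVV^\dagger)=d\,\tr(QUPU^\dagger)=X.
\]
Finally, applying conditional Jensen to the map $t\mapsto e^{\xi t}$, which is convex for \emph{every} real $\xi$, yields $\E[e^{\xi Y}\mid V]\ge e^{\xi\,\E[Y\mid V]}=e^{\xi X}$; taking the expectation over $V$ gives $\E\,e^{\xi Y}\ge\E\,e^{\xi X}$, which is the assertion. (For $\xi\ge1$ the left side is $+\infty$ and the bound is vacuous; for $\xi<1$ it evaluates to $(1-\xi)^{-pq}$, the input needed for the Chernoff bound of Lemma~\ref{lem:projector-overlap}.)

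The main obstacle is purely structural: identifying the \emph{correct} coupling. The tempting move of conditioning on $B$ and integrating out the complementary block $C=(I-Q)G$ fails, since $\E_C[(B^\dagger B+C^\dagger C)^{-1}]$ is not $\tfrac1d I_p$; worse, that conditioning produces $\E[X\mid B]\approx Y$-type relations that push Jensen in the wrong direction. The resolution is to condition instead on the \emph{angular} factor $V$ and exploit its independence from the \emph{radial} factor $R$ in the polar decomposition, together with the elementary identity $\E[G^\dagger G]=dI_p$. This is precisely what converts the Haar overlap $X$ into $\E[Y\mid V]$ and makes the single Jensen step cover all $\xi\in\bbR$ simultaneously, rather than only $\xi\ge0$ as a stochastic-domination argument would.
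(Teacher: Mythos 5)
Your proof is correct, and while it ends at the same conditional-Jensen step as the paper, it gets there by a genuinely different decomposition. The paper also Gaussianizes: it views $(x_1,\ldots,x_{2dp})$ as an independent radius $r$ times a random pure state $\ket{\hat x}$ on $\mathbb{C}^d \otimes \mathbb{C}^p$, writes $\sum_{i=1}^{2pq} x_i^2 = r^2 \tr(Q\rho)$ with $\rho$ the reduced density matrix of $\ket{\hat x}$, and then invokes its appendix decoupling result (Lemma~\ref{lem:rv-decompose}) to replace $\rho$ by $U\eta U^\dagger$ with the spectrum $\eta$ independent of a Haar $U$; Jensen conditionally on $U$, averaging over $r$ and $\eta$ with $\mathbb{E}[r^2]=dp$ and $\mathbb{E}_\eta[\eta]=P/p$ (the latter from permutation symmetry), then yields the bound. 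Your Ginibre matrix $G$ is exactly the matrix reshaping of that Gaussian vector, and the polar-decomposition fact you invoke ($V$ Haar on the Stiefel manifold and independent of $R=(G^\dagger G)^{1/2}$) plays precisely the role of Lemma~\ref{lem:rv-decompose}; but by conditioning on the angular factor $V$ and using the single Wishart identity $\mathbb{E}[G^\dagger G]=dI_p$, you absorb the paper's two separate averages (radial and spectral) into one step and bypass both the decoupling lemma and the permutation-symmetry computation. What the paper's route buys is a reusable, quantum-state-flavored statement about arbitrary $\mathbb{U}(d)$-invariant ensembles of density matrices; what yours buys is brevity and self-containedness modulo one standard random-matrix fact, with the same feature that a single conditional Jensen covers all real $\xi$ at once (including the vacuous regime $\xi\ge 1$, where the left side is $+\infty$, which you correctly flag). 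Your equal-means sanity check and your remark on why the tempting alternative coupling (conditioning on $B$ and integrating out the complementary block) pushes Jensen the wrong way are both sound and, if anything, clarify why the angular conditioning is the right one.
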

\begin{proof}
Consider $\mathbb C^{dp} = \mathbb C^d \otimes \mathbb C^p$,
and define $Q' = Q \otimes I_p$ to be the projector of rank $qp$.
Without loss of generality, we assume that $P, Q$ are diagonal.
The random tuple $(x_1,\ldots,x_{2dp})$ has the probability density $\frac{1}{\pi^{dp}}\exp( - r^2 ) \rd^{2dp} x$
where $r^2 = \sum_{i=1}^{2dp} x_i^2$.
This means in particular that the magnitude variable $r$
and the direction variable $\hat x = (x_1,\ldots,x_{2dp}) / r$ are independent.
The direction variable $\hat x$ defines a normalized pure state $\ket{ \hat x}$ on $\mathbb C^d \otimes \mathbb C^p$,
and the sum $\sum_{i=1}^{2pq} \hat x_i^2$ can be regarded as the squared norm of $Q' \ket{\hat x}$.
\begin{align*}
\sum_{i=1}^{2pq} x_i^2 = r^2 \bra{\hat x} Q' \ket{\hat x}  = r^2 \tr Q \rho
\end{align*}
where $\rho$ is the reduced density matrix of $\ket{\hat x}$ on $\mathbb C^d$.

As a random variable, $\rho$ defines a $\mathbb U(d)$-invariant measure on
the set of all density operators of rank at most $p$.
By Lemma~\ref{lem:rv-decompose}, $\rho$ may be replaced with
a random vector variable $\eta$ and a Haar random $U$.
Due to the permutation invariance and the normalization,
we have $\mathbb E \eta_i = \mathbb E \eta_{j} = 1/p$,
so $\mathbb E_\eta \sum_i \eta_i \ket i \bra i = P /p$.

By the convexity of $\exp$,
\begin{align*}
\mathbb E_{x_i}& \exp\left[ \xi \sum_{i=1}^{2pq} x_i^2 \right] \\
&=
\mathbb E_r \mathbb E_\eta \mathbb E_U
\exp \left[ \xi r^2 \tr Q U \eta U^\dagger \right] \\
&\ge
\mathbb E_U \exp \left[
 \xi (\mathbb E_r r^2 ) \mathbb E_\eta \tr Q U \eta U^\dagger
\right] \\
&=
\mathbb E_U \exp \left[
 \xi (d p ) \tr QU (P/p) U^\dagger
\right].
\end{align*}
we complete the proof.
\end{proof}

\begin{proof}[Proof of Lemma~\ref{lem:projector-overlap}]
Recall Markov's inequality: For non-negative real random variable $X$ and $a > 0$, $\Pr[ X \ge a ] \le \mathbb E X / a$.
This is easily seen once we define $Y = a$ if $X \ge a$ and $Y=0$ if $X < a$, so $Y \le X$.
Then, $\Pr[ X \ge a] = \Pr[ Y = a] = \mathbb E Y / a \le \mathbb E X / a$.

Let us abbreviate $\frac{d}{pq} \tr QUPU^\dagger$ as  $Z$.
For any $\xi > 0$ and $z > 0$,
\begin{align*}
\Pr[ Z \ge 1 + z ]
&= \Pr[ e^{\xi Z} \ge e^{\xi(1+z)} ] \\
&\le \mathbb E_U e^{\xi Z} e^{-\xi(1+z)} \\
&\le
\mathbb E_{x_i} \exp\left[ \frac{\xi}{pq} \sum_{i=1}^{2pq} x_i^2 \right] e^{-\xi(1+z)} \\
&= e^{-\xi(1+z)} \left( 1 - \frac{\xi}{pq} \right)^{- pq}
\end{align*}
The last equality is directly evaluated with PDF $\frac{1}{\sqrt{\pi}} e^{-z^2}$.
The best bound is when $ \xi = pq z/ (1+z) > 0$.
Substituting this value for $\xi$, we prove the first inequality in the theorem.

The opposite direction goes similarly.
Let $\xi > 0$ and $z \in (0,1)$.
\begin{align*}
\Pr[ Z \le 1 - z ]
&= \Pr[ e^{-\xi Z} \ge e^{-\xi(1-z)} ] \\
&\le \mathbb E e^{-\xi Z} e^{\xi(1-z)} \\
&\le
\mathbb E \exp\left[ -\frac{\xi}{pq} \sum_{i=1}^{2pq} x_i^2 \right] e^{\xi(1-z)} \\
&= e^{\xi(1-z)} \left( 1 + \frac{\xi}{pq} \right)^{- pq}
\end{align*}
The best bound is when $\xi = pq z /(1-z) > 0$.
Substituting this value for $\xi$, we prove the second inequality in the theorem.

The last inequality can be proved by examining extreme values of,
for example, $g(z) = z - \ln(1+z) - (1-\ln 2) z^2$.
The minimum values in the range $z \in (-1,1]$ occur at $z = 0, 1$,
where $g(z) = 0$.
\end{proof}

\bibliography{opt-tomo}
\end{document}